\def\declaretrapezoid#1#2#3#4#5{
\pgfdeclareshape{#1}{
\inheritsavedanchors[from=rectangle]
\inheritanchorborder[from=rectangle]
\inheritanchor[from=rectangle]{center}
\inheritanchor[from=rectangle]{north}
\inheritanchor[from=rectangle]{south}
\inheritanchor[from=rectangle]{east}
\inheritanchor[from=rectangle]{west}
\inheritanchor[from=rectangle]{south east}
\inheritanchor[from=rectangle]{north west}
\inheritanchor[from=rectangle]{north east}
\inheritanchor[from=rectangle]{south west}

\backgroundpath{
  \pgfextractx{\pgf@xa}{\southwest}
  \pgfextracty{\pgf@ya}{\southwest}
  \pgfextractx{\pgf@xb}{\northeast}
  \pgfextracty{\pgf@yb}{\northeast}

  \pgfpathmoveto{\pgfpoint{\pgf@xa-#2*(\pgf@yb-\pgf@ya)}{\pgf@ya}}
  \pgfpathlineto{\pgfpoint{\pgf@xa-#3*(\pgf@yb-\pgf@ya)}{\pgf@yb}}
  \pgfpathlineto{\pgfpoint{\pgf@xb+#4*(\pgf@yb-\pgf@ya)}{\pgf@yb}}
  \pgfpathlineto{\pgfpoint{\pgf@xb+#5*(\pgf@yb-\pgf@ya)}{\pgf@ya}}
  \pgfpathlineto{\pgfpoint{\pgf@xa-#2*(\pgf@yb-\pgf@ya)}{\pgf@ya}}
  }
  }
}
  \pgfextractx{\pgf@xa}{\southwest}
  \pgfextracty{\pgf@ya}{\southwest}
  \pgfextractx{\pgf@xb}{\northeast}
  \pgfextracty{\pgf@yb}{\northeast}
  \pgfextractx{\pgf@xa}{\southwest}
  \pgfextracty{\pgf@ya}{\southwest}
  \pgfextractx{\pgf@xb}{\northeast}
  \pgfextracty{\pgf@yb}{\northeast}
  \pgfextractx{\pgf@xa}{\southwest}
  \pgfextracty{\pgf@ya}{\southwest}
  \pgfextractx{\pgf@xb}{\northeast}
  \pgfextracty{\pgf@yb}{\northeast}
  \pgfextractx{\pgf@xa}{\southwest}
  \pgfextracty{\pgf@ya}{\southwest}
  \pgfextractx{\pgf@xb}{\northeast}
  \pgfextracty{\pgf@yb}{\northeast}
\tikzstyle{terminal}=[circuit ee IEC, ground, thick, rotate=-90]
\tikzstyle{cedge}=[swap]
\tikzstyle{qedge}=[swap,very thick]
\tikzstyle{qprocess}=[draw,very thick,shape=right-in trapezoid]
\tikzstyle{qprocessxflip}=[draw,very thick,shape=left-in trapezoid]
\tikzstyle{qprocessyflip}=[draw,very thick,shape=right-out trapezoid]
\tikzstyle{qprocessxyflip}=[draw,very thick,shape=left-out trapezoid]
\tikzstyle{qstate}=[draw,very thick,shape=clipped triangle yflip]
\tikzstyle{qstatexflip}=[draw, very thick,shape=clipped triangle xyflip]
\tikzstyle{qeffectxflip}=[draw, very thick,shape=clipped triangle xflip]
\tikzstyle{qeffect}=[draw, very thick,shape=clipped triangle]
\tikzstyle{cstate}=[draw,shape=clipped triangle yflip]
\tikzstyle{cstatexflip}=[draw, shape=clipped triangle xflip]
\tikzstyle{ceffectxflip}=[draw, shape=clipped triangle xyflip]
\tikzstyle{ceffect}=[draw, shape=clipped triangle]
\tikzstyle{cprocess}=[draw,shape=right-in trapezoid]
\tikzstyle{cprocessxflip}=[draw,shape=left-out trapezoid]
\tikzstyle{cprocessyflip}=[draw,shape=right-out trapezoid]
\tikzstyle{cprocessxyflip}=[draw,shape=left-in trapezoid]
\tikzstyle{spider}=[draw,shape=circle]
\tikzstyle{bell}=[draw,shape=circle,fill=gray]
\tikzstyle{hadamard}=[draw,very thick,shape=rectangle]
\tikzstyle{none}=[inner sep=0pt]
\definecolor{hexcolor0xff0000}{rgb}{1.000,0.000,0.000}
\definecolor{hexcolor0x000000}{rgb}{0.000,0.000,0.000}
\definecolor{hexcolor0x00ff00}{rgb}{0.000,1.000,0.000}
\definecolor{hexcolor0x000000}{rgb}{0.000,0.000,0.000}
\definecolor{hexcolor0xffff00}{rgb}{1.000,1.000,0.000}
\definecolor{hexcolor0x000000}{rgb}{0.000,0.000,0.000}
\tikzstyle{rn}=[circle,fill=hexcolor0xff0000,draw=hexcolor0x000000,line width=0.8 pt]
\tikzstyle{gn}=[circle,fill=hexcolor0x00ff00,draw=hexcolor0x000000,line width=0.8 pt]
\tikzstyle{yn}=[circle,fill=hexcolor0xffff00,draw=hexcolor0x000000,line width=0.8 pt]
\tikzstyle{simple}=[-,draw=hexcolor0x000000,line width=2.000]
\tikzstyle{arrow}=[-,draw=hexcolor0x000000,postaction={decorate},decoration={markings,mark=at position .5 with {\arrow{>}}},line width=2.000]
\tikzstyle{tick}=[-,draw=hexcolor0x000000,postaction={decorate},decoration={markings,mark=at position .5 with {\draw (0,-0.1) -- (0,0.1);}},line width=2.000]
\theoremstyle{plain}
\newtheorem{theorem}{Theorem}
\newtheorem{definition}[theorem]{Definition}
\newtheorem{proposition}[theorem]{Proposition}
\newtheorem{lemma}[theorem]{Lemma}
\newcommand{\nn}{\nonumber\\}
\newcommand{\ket}[1]{| #1 \rangle}
\newcommand{\xp}{X'}
\newcommand{\zp}{Z'}
\title{Parallel Self-Testing of the GHZ State \\ with a Proof by Diagrams}
\author{Spencer Breiner
\institute{NIST}
\email{spencer.breiner@nist.gov}
\and
Amir Kalev
\institute{Joint Center for Quantum 
Information \\ and Computer Science (QuICS)}
\email{amirk@umd.edu}
\and
Carl A. Miller
\institute{NIST}
\institute{Joint Center for Quantum 
Information \\ and Computer Science (QuICS)}
\email{camiller@umd.edu}
}
\begin{document}
\maketitle

\begin{abstract}
Quantum self-testing addresses the following question: is it possible to verify the existence of a multipartite state even when one's measurement devices are completely untrusted? This problem has seen abundant activity in the last few years, particularly with the advent of parallel self-testing (i.e., testing several copies of a state at once), which has applications not only to quantum cryptography but also quantum computing. In this work we give the first error-tolerant parallel self-test in a three-party (rather than two-party) scenario, by showing that an arbitrary number of copies of the GHZ state can be self-tested. In order to handle the additional complexity of a three-party setting, we use a diagrammatic proof based on categorical quantum mechanics, rather than a typical symbolic proof. The diagrammatic approach allows for manipulations of the complicated tensor networks that arise in the proof, and gives a demonstration of the importance of picture-languages in quantum information. \end{abstract}

\section{Introduction}

Quantum rigidity has its origins in quantum key distribution, which is one of the original problems in quantum cryptography. 
In the 1980s Bennett and Brassard proposed a protocol for secret key distribution across an untrusted public quantum channel \cite{bennett2014quantum}.  A version of the Bennett-Brassard protocol can be expressed as follows: Alice prepares $N$ EPR pairs, and shares the second half of these pairs with Bob through the untrusted public channel.  Alice and Bob then perform random measurements on the resulting state, and check the results to verify that indeed their shared state approximates $N$ EPR pairs.  If these tests succeed, Alice and Bob then use other coordinated measurement results as the basis for their shared key.  Underlying the proof of security for the Bennett-Brassard protocol is the idea that if a shared $2$-qubit state approximates the behavior of a Bell state under certain measurements, then the state itself must approximate a Bell state. 

If we wish to deepen the security, we can ask: what if Alice's and Bob's measurement devices are also not trusted?  Can we prove security at a level that guards against possible exploitation of defects in their measurement devices?  This leads to the question of quantum rigidity: is it possible to completely verify the behavior of $n$ untrusted quantum measurement devices, based only on statistical observation of their measurement outputs, and without any prior knowledge of the state they contain?

We say that an $n$-player cooperative game is \textit{rigid} if an optimal score at that game guarantees that the players must have used a particular state and particular measurements.  We say that a state is \textit{self-testing} if its existence can be guaranteed by such a game.  Early results on this topic focused on self-testing the $2$-qubit Bell state \cite{popescu1992states, mayers1998quantum, mckague2012robust}.  Since then a plethora of results on other games and other states have appeared.  The majority of works have focused on the bipartite case, and there are a smaller number of works that address $n$-partite states for $n \geq 3$ \cite{miller2013optimal,mckague2016interactive, wu2014robust, yao2016self-testing, Pal:2014,
fadel2017}.

More recently, it has been observed that rigid games exist that self-test not only one copy of a bipartite state, but several copies at once.  Such games are a resource not only for cryptography, but also for quantum computation: these games can be manipulated to force untrusted devices to perform measurements on copies of the Bell state which carry out complex circuits.  This idea originated in \cite{reichardt2013classical} and has seen variants and improvements since then \cite{mckague2016interactive,natarajan2017quantum,coladangelo2017verifier}.  For such applications, it is important that the result include an error term which is (at most) bounded by some polynomial function of the number of copies of the state.\footnote{This condition
allows the computations to be performed in polynomial time.  The works \cite{natarajan2017quantum,coladangelo2017verifier} go
further, and prove an error
term that is independent
of the number of copies of the state.}


It is noteworthy that all results proved so far for error-tolerant self-testing of several copies of a state at once (that is, parallel self-testing) apply to bipartite states only \cite{mckague2016self,mckague2017self,
coudron2016parallel,natarajan2017quantum,natarajan2018low,ostrev2016structure,chao2016test,reichardt2013classical,coladangelo2017robust,coladangelo2017parallel,coladangelo2017all,coladangelo2018generalization}.  There is a general multipartite self-testing result in \cite{supic2017simple} which can be applied to the parallel case, but it is not error-tolerant and no explicit game is given.
Complexity of proofs is a factor in establishing new results in this direction: while it would be natural to extrapolate existing parallelization techniques to prove self-tests for $n$-partite states, the proofs for the bipartite case are already difficult, and we can expect that the same proofs for $n \geq 3$ are more so.  Yet, if this is an obstacle it is one worth overcoming, since multi-partite states are an important resource in cryptography.  For example, a much-cited paper in 1999 \cite{hillery1999quantum} proved that secret sharing is possible using several copies of the GHZ state
$\left| GHZ \right> = \frac{1}{\sqrt{2}} \left( \left| 000 \right> + \left| 111 \right> \right)$, in analogy to the use of Bell states in the original QKD protocol \cite{bennett2014quantum}.

In this work, we give the first proof of an error-tolerant parallel \textit{tripartite} self-test.  
Specifically, we prove that a certain class of $3$-player games self-tests $N$ copies of the GHZ state, for any $N \geq 1$, with an error term
that grows polynomially with $N$.  
To accomplish this we introduce, for the first time, the graphical language of categorical quantum mechanics into the topic of rigidity.  As we will discuss below, 
the use of graphical languages is a critical feature of the proof --- games involving more than $2$ players
involve complicated tensor networks, which are not easily expressed symbolically.
Our result thus demonstrates the power and importance of visual formal reasoning in quantum information processing.

\subsection{Categorical quantum mechanics}

Category theory is a branch of abstract mathematics which studies systems of interacting processes. In Categorical Quantum Mechanics (CQM), categories (specifically \emph{symmetric monoidal categories}) are used to represent and analyze the interaction of quantum states and processes.

Inspired by methods from computer science, CQM introduces an explicit distinction between traditional quantum semantics in Hilbert spaces and the syntax of quantum protocols and algorithms. In particular, symmetric monoidal categories support a diagrammatic formal syntax called \emph{string diagrams}, which provide an intuitive yet rigorous means for defining and analyzing quantum processes, in place of the more traditional bra-ket notation. This expressive notation helps to clarify definitions and proofs, making them easier to read and understand, and encourages the use of equational (rather than calculational) reasoning.

The origins of CQM's graphical methods can be found in Penrose's tensor diagrams \cite{penrose1971applications}, although earlier graphical languages from physics (Feynman diagrams) and computer science (process charts) can be interpreted in these terms. Later, Joyal and Street \cite{joyal1991geometry} used category theory and topology to formalize these intuitive structures. More recently, the works of Selinger \cite{selinger2004towards,selinger2007dagger} and Coecke, et al. \cite{abramsky2004categorical, coecke2006quantum} have substantially tightened the connection between categorical methods and quantum information, in particular developing diagrammatic approaches to  positive maps and quantum-classical interaction, respectively. A thorough and self-contained introduction to this line of research can be found in the recent textbook \cite{coecke2017picturing}.

For a brief review of categorical quantum mechanics and the syntax of string diagrams used in our proofs, see Appendix \ref{appendix cqm}.

\subsection{Statement of main result}

\label{statementsubsec}

In the three-player GHZ game, a referee chooses a random bit string
$xyz \in \{ 0, 1 \}^3$ such
that $x \oplus y \oplus z$=0, and distributes $x, y, z$ to the three 
players (Alice, Bob, Charlie) respectively, who return numbers $a, b, c \in \{ -1, 1 \}$ respectively.  The game is won if
\begin{eqnarray}
\label{ghzcondition}
abc & = & (-1)^{\neg (x \vee y \vee z)}.
\end{eqnarray}
(In other words, the game is
won if the parity of the outputs is even and $xyz \neq 000$, or the parity of the outputs is odd and $xyz = 000$.)  It is easy to prove that this game has no classical winning strategy.  On the other hand, if Alice, Bob, and Charlie share the $3$-qubit state
\begin{eqnarray}
\label{gstate}
\left| G \right> & = & 
\frac{1}{2 \sqrt{2}} \left( \sum_{r + s + t \leq 1} \left| rst \right> 
- \sum_{r + s + t \geq 2} \left| rst \right> \right),
\end{eqnarray}
and obtain their outputs by either performing an $X$-measurement $\left\{
\left|  + \right> \left< 
+ \right|, \left|  - \right> \left< 
- \right| \right\}$ on input
$0$ or the $Z$-measurement
$\left\{
\left|  0 \right>
\left<  0 \right|, \left| 
1 \right> 
\left< 
1 \right|
\right\}$ on input 1, they win perfectly.
This is known to be the only optimal
strategy (up to local changes of basis) and therefore the GHZ game is
rigid \cite{mckague2011self}. (An equivalent strategy, which is more conventional, is to use the state  $ \left| GHZ \right> =  \frac{1}{\sqrt{2}}
\left( \left| 000 \right > 
+ \left| 111 \right> \right)$
and $X$- and $Y$-measurements
to win the GHZ game.  We will use the 
state $\left| G \right>$ instead for
compatibility
with previous work on rigidity.
Note that $\left| G \right>$
is equivalent under local unitary transformations to $\left| GHZ \right>$.)

To extend this game to a self-test
for the $\left| G \right>^{\otimes N}$ state,
we use a game modeled after
\cite{chao2016test}.  The game requires the players to simulate playing the GHZ game $N$ times.
We give input to the $r$th player in the form of a pair $(U_r, f_r )$ where
$\{ 1, 2, \ldots, N \} \supseteq U_r \stackrel{f_r}{\longrightarrow} \{ 0, 1 \}$
is a partial function assigning an ``input'' value for some subset of the game's ``rounds'' $1,\ldots, N$. 
The output given by such a player is a function $g_r \colon U_r \to \{ 0, 1 \}$ assigning a bit-valued ``output'' to each round.  The game is won if the 
GHZ condition (\ref{ghzcondition}) is satisfied on all the rounds in $U_1 \cap U_2 \cap U_3$ for which the input
string was even-parity.

Fortunately, it is not necessary to
query the players on all possible subsets $U_r \subseteq \{ 0, 1, \ldots , N \}$ (which would involve an exponential number of inputs) --- it is only necessary to query them on one- and two-element subsets.  This
yields the game $\overline{GHZ}_N$, which is formally defined in
Figure~\ref{fig:centralgame}
below.  Our main result is the following.  (See Proposition~\ref{prop:state-approx multi} below for a formal statement.)

\vskip0.1in

\fbox{\parbox{6in}{

\begin{theorem}
The game $\overline{GHZ}_N$ is a self-test for the state $\left| G \right>^{\otimes N}$, with error
term ${\cal O} (  N^4 \sqrt{\epsilon} )$.
\end{theorem}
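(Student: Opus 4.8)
The plan is to adapt the operator-algebraic strategy for parallel self-testing of \cite{chao2016test} to the tripartite setting, executing the isometry construction and the attendant bookkeeping diagrammatically rather than symbolically. First I would convert the hypothesis of near-optimal play into a family of approximate operator identities. Since $\overline{GHZ}_N$ is built from polynomially many one- and two-round subgames, a union bound shows that winning with probability $1 - \epsilon$ forces each individual subgame to be won with probability $1 - \mathcal{O}(N^2 \epsilon)$. Applying the single-copy rigidity of the GHZ game \cite{mckague2011self} round-by-round then furnishes, for each player $r \in \{A, B, C\}$ and each round $i \in \{1, \ldots, N\}$, a pair of $\pm 1$-valued observables $X_r^{(i)}, Z_r^{(i)}$ that approximately obey the GHZ-paradox relations on the shared state $\ket{\psi}$, with the relevant triple products (one observable per player) acting as $\pm I$ up to error $\mathcal{O}(N\sqrt{\epsilon})$ in the state-dependent seminorm $\| M \ket{\psi} \|$.

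From these I would extract the two families of relations that drive the construction. Multiplying the four GHZ relations within a single round yields per-round anticommutation, $(X_r^{(i)} Z_r^{(i)} + Z_r^{(i)} X_r^{(i)}) \ket{\psi} \approx 0$, reflecting that the ideal measurements are the anticommuting Paulis $X$ and $Z$; and the two-round subgames, which require each player to answer consistently for two rounds at once, force cross-round commutation within each player's own algebra, $[P_r^{(i)}, Q_r^{(j)}] \ket{\psi} \approx 0$ for $i \neq j$ and $P, Q \in \{X, Z\}$. The ubiquitous factor $\sqrt{\epsilon}$ enters through the standard estimate that a $\pm 1$ observable $M$ with $\matele{\psi}{M}{\psi} \geq 1 - \delta$ satisfies $\|(M - I)\ket{\psi}\| = \mathcal{O}(\sqrt{\delta})$, applied here with $\delta = \mathcal{O}(N^2 \epsilon)$.

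With the relations in hand I would assemble a swap isometry $\Phi = \Phi_A \otimes \Phi_B \otimes \Phi_C$, each local factor $\Phi_r$ being a product of $N$ single-round swap gates built from the approximate Paulis $X_r^{(i)}, Z_r^{(i)}$ in the standard way. Rendering both $\Phi$ and the network for $\Phi \ket{\psi}$ as string diagrams, the goal is to rewrite this tensor network into the one computing $\ket{G}^{\otimes N} \otimes \ket{\mathrm{junk}}$. Each rewrite is a local diagram identity justified by one of the approximate relations above --- sliding an $X_r^{(i)}$ or $Z_r^{(i)}$ along a wire (commutation) or accruing a sign (anticommutation) --- and the diagrammatic presentation is what makes it tractable to see which operators must be transported through the $N$-fold, three-party network and in what order, a layer of organization that a purely symbolic calculation obscures once three players are involved.

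The main obstacle, and the origin of the $N^4$ prefactor, is controlling the accumulation of error across these rewrites. Disentangling a single round requires transporting its swap gate past those of the remaining $\mathcal{O}(N)$ rounds throughout the network, and each such move contributes an $\mathcal{O}(N\sqrt{\epsilon})$ error (the $\sqrt{\epsilon}$ of a single relation, inflated by the union-bound loss of the first step); summed via the triangle inequality over all rounds and all the nested commutations the isometry demands, the total degrades by a further polynomial factor in $N$. The delicate point is to organize the rewrites so that no relation is invoked outside its regime of validity and so that the dependence on $N$ stays polynomial rather than exponential; carefully tracking these contributions is what pins the final bound at $\mathcal{O}(N^4 \sqrt{\epsilon})$, and it is here --- not in any single diagram manipulation --- that the real work of the proof lies.
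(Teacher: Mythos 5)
Your proposal is correct and follows essentially the same route as the paper: a union bound giving per-subgame losing probability $\mathcal{O}(N^2\epsilon)$, the standard $\sqrt{\delta}$ estimate yielding the triple-product GHZ relations with error $\mathcal{O}(N\sqrt{\epsilon})$, per-round anticommutation by chaining those relations across the three parties, cross-round commutation extracted from the exactly commuting two-round reflections, chained McKague-style swap isometries applied locally by each player, and polynomial error bookkeeping culminating in $\mathcal{O}(N^4\sqrt{\epsilon})$. The only divergences are cosmetic: the paper derives the initial operator relations directly from the losing-probability expressions rather than by invoking the single-copy rigidity theorem (which, taken as a black-box isometry statement, would not compose in parallel --- though the explicit relations you write down are exactly the paper's), and it concludes not by rewriting the swap network outright into $\left| G \right>^{\otimes N} \otimes \left| \mathrm{junk} \right>$ but by decomposing the image of the state under the three chained isometries in the common eigenbasis of the commuting operators $X \otimes Z \otimes Z$, $Z \otimes X \otimes Z$, $Z \otimes Z \otimes X$ and bounding the total mass of the components off the unique joint $+1$ eigenvector, summing an $\mathcal{O}(N^3\sqrt{\epsilon})$ bound over $\mathcal{O}(N)$ stabilizer relations to reach $\mathcal{O}(N^4\sqrt{\epsilon})$.
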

}}

\vskip0.1in

In other words, if three devices succeed at the game $\overline{GHZ}_N$ with probability $1 - \epsilon$,
then the devices must contain a state that approximates the state $\left| G \right>^{\otimes N}$ up to an error term of ${\cal O} (  N^4 \sqrt{\epsilon} )$.  
The proof proceeds by assuming that the players have such a high-performing strategy, 
and then using the measurements from that strategy to map their state isometrically to a state that is approximately of the form $\left| G \right> \otimes L'$,
where $L'$ is some arbitrary tripartite ``junk'' state.
This approach is a graphical translation
of the method of many previous works on rigidity (in particular, \cite{mckague2016interactive} and our previous paper \cite{kalev2017rigidity}).

\subsection{Related works and further directions}

In previous works on quantum rigidity,
pictures are often used
as an aid to a proof, but not as
a proof itself.  
The only other rigidity paper that we know of which used rigorous graphical methods is the recent paper \cite{coladangelo2017robust} which successfully used the concept of a group picture to prove rigidity for a new class of $2$-player games.  Group pictures are visual proofs of equations between elements of a finitely presented group.  In the context of rigidity, group pictures construct approximate relations between products of sequences of operators, and as such they are a useful general tool for proving rigidity of $2$-player games.
An interesting further direction is to try to merge the formalism of \cite{coladangelo2017robust} with the one given here in order to address general $n$-player games.

A natural next step is to explore cryptographic applications.
Since GHZ states form the basis for the secret-sharing scheme of \cite{hillery1999quantum}, it may be useful to see if the game $\overline{GHZ}_N$ can be used to create a new protocol for $3$-party secret sharing using untrusted quantum devices.

\section{Preliminaries}

\subsection{The augmented GHZ game}

The game $\overline{GHZ}_N$ that we will use
to self-test the state $\left| G \right>^{\otimes N}$ from equation (\ref{gstate}) is given in Figure~\ref{fig:centralgame}.  In this game, each player
is requested to give outputs for either one or two round numbers
(chosen from the set $\{ 1, 2, \ldots, N \}$) given
inputs for each round number.  Both the inputs and the players' outputs are expressed as partial functions on the set $\{ 1, 2, \ldots, N \}$.
This game is modeled after \cite{chao2016test}.

The variable  $r$
determines the type of input given to each player.
Note that
in the case $r = 0$, there are $4N$ possible input combinations that the referee could give to the players (since there are $N$ possible values for $i$, and $4$ possible values for
$(f_1 ( i ) , f_2 ( i ) , f_3 ( i ) )$) and for each of the values, $r = 1, 2, 3$, there are $8 N ( N - 1 )$ possible input combinations. 
Each valid input combination occurs
with probability $\Omega ( 1/N^2 )$.

\begin{figure}
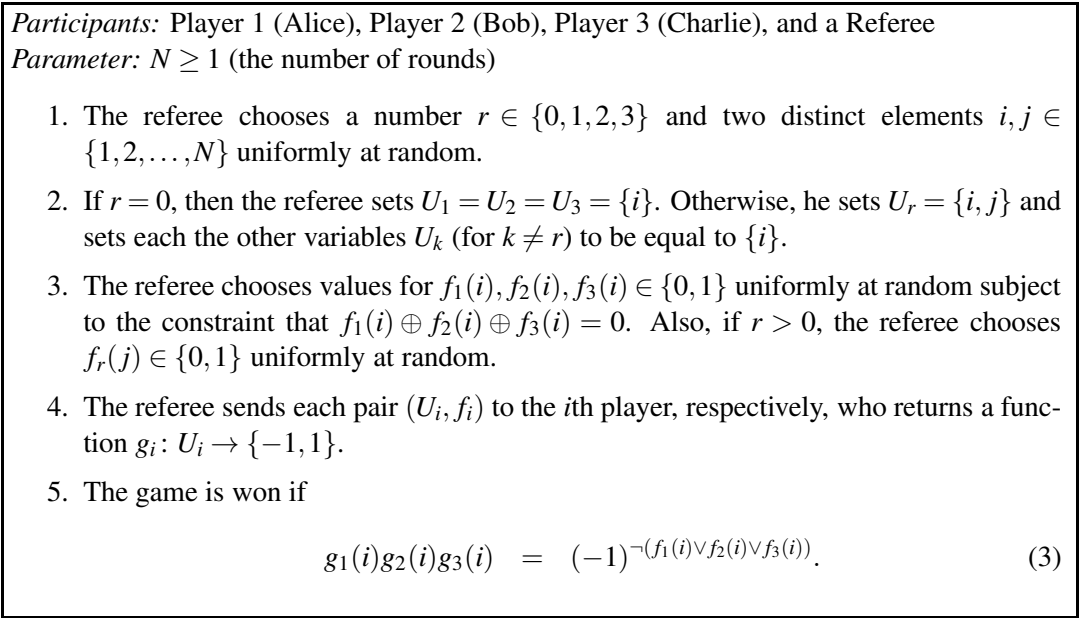

\begin{center}
\fbox{\parbox{5.5in}{
\textit{Participants:} Player 1 (Alice),
Player 2 (Bob), Player 3 (Charlie), and a Referee\\ 
\textit{Parameter:} $N \geq 1$ (the number of rounds) 

\begin{enumerate}
\item The referee chooses a number
$r \in \{ 0, 1, 2, 3 \}$ and two distinct elements $i, j \in \{ 1, 2, \ldots, N \}$ uniformly at random.

\item If $r = 0$, then the referee sets
$U_1 = U_2 = U_3 = \{ i \}$.  Otherwise,
he sets $U_r = \{ i , j \}$ and
sets each the other variables $U_k$ (for $k \neq r$) to be equal to $\{ i \}$. 

\item The referee chooses values
for $f_1(i ), f_2 ( i ), f_3 ( i) \in \{ 0, 1 \}$ uniformly at random subject to the constraint that
$f_1 (i ) \oplus f_2 ( i ) \oplus f_3 ( i ) 
 = 0$.  Also, if $r > 0$, the referee chooses $f_r ( j ) \in \{ 0, 1 \}$ uniformly at random.
 
\item The referee sends each pair $(U_i, f_i)$ to
the $i$th player, respectively, who returns
a function $g_i \colon U_i \to \{ -1 ,1 \}$.

\item The game is won if
\begin{eqnarray}
g_1 (i) g_2 ( i ) g_3 ( i ) & = & (-1)^{\neg ( f_1 ( i ) \vee f_2 ( i )
\vee f_3 ( i ))}.
\end{eqnarray}
\end{enumerate}
}
}
\end{center}
\caption{The augmented GHZ game
($\overline{GHZ}_N$).}
\label{fig:centralgame}
\end{figure}

We wish to describe the set of all possible quantum behaviors
by the players Alice, Bob, and Charlie in $\overline{GHZ}_N$.  In the definition
that follows, we use the term \textit{reflection} to mean an observable with values in $\{ \pm 1 \}$ (in other words,
a Hermitian operator whose square is the identity).
A \textit{quantum
strategy} for the game $\overline{GHZ}_N$ game
consists of the following data.
\begin{enumerate}
\item A unit vector $L \in A \otimes B \otimes C$, where
$A, B, C$ are finite-dimensional Hilbert spaces.

\item For each $i \in \{ 1, 2, \ldots, N \}$,
$b \in \{ 0, 1 \}$, and $W \in \{ A, B, C \}$, a reflection
\begin{eqnarray}
R_{i \to b}^W
\end{eqnarray}
on $W$.

\item For each $i,j \in \{ 1, 2, \ldots, N \}$,
$b,c \in \{ 0, 1 \}$, and $W \in \{ A, B, C \}$, two \textit{commuting}
reflections
\begin{eqnarray}
\label{simulmeas}
R_{i \to b \mid j \to c}^W & \textnormal{ and } & R_{ j \to c \mid i \to b}^W
\end{eqnarray}
on $W$.
\end{enumerate}

The spaces $A, B, C$ denote the registers possessed by Alice,
Bob, and Charlie, respectively.  The vector $L$ denotes
the initial state that they share before the game begins.
The reflections $R_{i \to b}^W$ describe
their behavior on singleton rounds (specifically, on a singleton round the player measures his or her register $W$
along the eigenspaces of $R_{i \to b}^W$, 
and reports either $+1$ or $-1$ for round $i$, appropriately).  The
reflections $R_{i \to b \mid j \to c}^W , R_{ j \to c \mid i \to b}^W$ describe their behavior on non-singleton rounds (specifically, if the input to a player is the function $[i \to b, j \to c]$, then 
they measure along the eigenspaces of
$R_{i \to b \mid j \to c}^W$ to obtain their
output for round $i$ and measure along the eigenspaces of $R_{ j \to c\mid  i \to b}^W$
to determine their output for round $j$).
Note that since 
the reflections in (\ref{simulmeas}) represent measurements
that are carried out simultaneously
by one of the players, we assume that these
two reflections commute.  (This assumption
will be critical in our proof).

For any reflection $Z$ and unit vector $\psi$
on a finite-dimensional Hilbert space $Q$, if we measure $\psi$ with $Z$ then the probability of obtaining an output of $-1$ is
precisely
\begin{eqnarray}
[1 - \textnormal{Tr} ( Z \psi \psi^* )]/2 & = &
\left\| Z \psi - \psi \right\|^2/4
\end{eqnarray}
We can use this fact to express the losing probabilities achieved by the players in terms of their strategy.  If $r = 0$
and the players are queried for round $i$ with
inputs $x, y,z$, then their losing probability is precisely
\begin{eqnarray}
\label{losingexp1}
\left\| R^A_{i \to x } 
R^B_{i \to y } R^C_{i \to z } L + (-1)^{x \vee y \vee z} L \right\|^2/4.
\end{eqnarray}
If Alice is queried for round $i$ with input $x$ and round $j$ with input $x'$, and Bob and Charlie are queried for round $i$ with inputs $y,z$ respectively, then the losing probability is
\begin{eqnarray}
\label{losingexp2}
\left\| R^A_{i \to x \mid j \to x'} 
R^B_{i \to y } R^C_{i \to z } L + (-1)^{x \vee y \vee z} L \right\|^2/4.
\end{eqnarray}
Similar expressions hold for the case where Bob or Charlie
is the party that receives two queries.

Note that the game $\overline{GHZ}_N$ is entirely symmetric between the three players Alice, Bob and Charlie.  This means that, given any strategy $\left( L , \left\{ R^W_{i \to b} \right\}, \left\{ R^W_{i \to b \mid j \to c } \right\} \right)$
for $\overline{GHZ}_N$,
we can produce five additional strategies by choosing a nontrivial permutation $\sigma \colon \{ 1, 2, 3 \} \to \{ 1, 2, 3 \}$ and 
giving the $p$th players' subsystem and measurement strategy to the $\sigma ( p )$th player, for each $p \in \{ 1, 2, 3 \}$.  We will make 
use of this symmetry in the proof that follows.

\subsection{Approximation chains}
\label{sec approx}

We make the following definition (see similar notation in \cite{kissinger2017picture,breiner2017graphical}).
If $F, G \colon A \to B$ are linear maps, then
\begin{eqnarray}
F & \underset{\delta}{=} & G
\end{eqnarray}
denotes the inequality
$\left\| F - G \right\|_2  \leq  {\cal O} ( \delta )$,
where $\left\| \cdot \right\|_2$ denotes
the Frobenius norm
and ${\cal O}$ denotes asymptotic big-O notation.  (If $F$ and $G$ are vectors,
then $\left\| F - G \right\|_2$
is simply the Euclidean distance
$\left\| F - G \right\|$.)

We will use this notation
especially in the case where $F$ and $G$
are processes represented by diagrams.
Note that this
relation is transitive: if $F \underset{\delta}{=} G$
and $G \underset{\delta}{=} H$, then
$F \underset{\delta}{=} H$.  Note also that if $J \colon B \to C$ is a
linear map whose operator norm satisfies $\left\| J \right\|_\infty \leq \alpha$, then $F \underset{\delta}{=} G \Longrightarrow J \circ F \underset{\delta \alpha}{=} J \circ G$.

\section{Rigidity of the augmented $GHZ$ game}

Throughout this section, suppose that
\begin{eqnarray}
\left( L , \left\{ R^A_{i \to a} \right\},
\left\{ R^B_{i \to b} \right\},
\left\{ R^C_{i \to c} \right\},
\left\{ R^A_{i \to a \mid j \to a'} \right\},
\left\{ R^B_{i \to b \mid j \to b'} \right\},
\left\{ R^C_{i \to c \mid j \to c'} \right\}
\right)
\end{eqnarray}
is a quantum strategy for the $\overline{GHZ}_N$ game
which wins with probability $1-\epsilon$.  It is helpful
to introduce some redundant notation for this strategy:
for any $W \in \{ A, B, C \}$ and $i \in \{ 1, 2, \ldots, N \}$
let
\begin{eqnarray}
X'_{W, i} & = & R^W_{i \mapsto 0}, \\
Z'_{W, i} & = & R^W_{i \mapsto 1},
\end{eqnarray}
The reason for this notation is that we intend to show that the operators 
$R^W_{i \mapsto 1}, R^W_{i \mapsto 1}$ approximate the behavior of the $X$ and $Z$ measurements in the optimal GHZ strategy (see
the beginning of subsection~\ref{statementsubsec}).
Similarly, let 
$X'_{W, i|j\to 1} =R^W_{i \to 0 \mid j \to 1 }$  and $Z'_{W, j|i\to 0} =R^W_{j \to 1 \mid i \to 0}$.

We will drop the subscript $W$ from this notation
when it is clear from the context.

\subsection{Initial steps}

Our first goal is to prove approximate commutativity
and anticommutativity relations for the operators
$X'_{W, i}, Z'_{W, i}$.  

Since the losing probability for our chosen strategy is $\epsilon$,
and each input string occurs
with probability $\Omega ( N^2 )$, we can conclude that the probability
of losing on any particular input combination is ${\cal O}( N^2 \epsilon )$.  By the discussion of expressions (\ref{losingexp1}) and (\ref{losingexp2}) above, we therefore have the following for any $i \in \{ 1, 2, \ldots, N \}$:
\begin{align}
\Bigl\| (  I + \xp_{A,i}\xp_{B,i}\xp_{C,i} ) \ket{L} \Bigr\|^2 & \leq   {\cal O}(N^2 \epsilon ),\nn
\Bigl\| (  I - \zp_{A,i}\zp_{B,i}\xp_{C,i} ) \ket{L} \Bigr\|^2 & \leq   {\cal O}(N^2 \epsilon ),\nn
\Bigl\| (  I - \xp_{A,i}\zp_{B,i}\zp_{C,i} ) \ket{L} \Bigr\|^2 & \leq   {\cal O}(N^2 \epsilon ),\nn
\Bigl\| (  I - \zp_{A,i}\xp_{B,i}\zp_{C,i} ) \ket{L} \Bigr\|^2 & \leq   {\cal O}(N^2 \epsilon ),
\label{keyineqs}
\end{align}
Additionally, if we take any of the four inequalities above, and replace any one of the operators $X_{W,i}'$ with $X'_{W, i\mid j \to t}$,
where $j \neq i$ and $t \in \{ 0, 1 \}$, the inequality remains true, and likewise if we replace any $Z_{W,i}'$ with $Z'_{W, i  \mid j \to t}$,
the inequality remains true.

We can translate the inequalities above into graphical form. 

\begin{proposition}\label{prop:switch multi}
The following inequalities hold.
\begin{eqnarray}
\begin{tikzpicture}
\node (out1) at (-1.5,2) {$A$};
\node (out2) at (0,2) {$B$};
\node (out3) at (1.5,2) {$C$};

\node[style=cprocess] (x1) at (-1.5,1.1) {$X_i'$};
\node[style=cprocess] (x2) at (0,1.1) {$X_i'$};
\node[style=cprocess] (x3) at (1.5,1.1) {$X_i'$};

\node[style=cstate, minimum width=2.5cm] (L) at (-.5,0) {$L$};

\draw (x1) to (out1);
\draw (x2) to (out2);
\draw (x3) to (out3);

\draw (x1) to (x1|-L.north);
\draw (x2) to (x2|-L.north);
\draw (x3) to (x3|-L.north);

\node (eq) at (3,.5) {\LARGE $\underset{N\sqrt{\epsilon}}{=\joinrel=}$};

\node (out4) at (4,1.5) {$A$};
\node (out5) at (5.5,1.5) {$B$};
\node (out6) at (7,1.5) {$C$};

\node[style=cstate, minimum width=2.5cm] (L2) at (5,.5) {$-L$};

\draw (out4) to (out4|-L2.north);
\draw (out5) to (out5|-L2.north);
\draw (out6) to (out6|-L2.north);

\end{tikzpicture}
\\
\label{XYYineq}
\begin{tikzpicture}
\node (out1) at (-1.5,2) {$A$};
\node (out2) at (0,2) {$B$};
\node (out3) at (1.5,2) {$C$};

\node[style=cprocess] (x1) at (-1.5,1.1) {$X_i'$};
\node[style=cprocess] (x2) at (0,1.1) {$Z_i'$};
\node[style=cprocess] (x3) at (1.5,1.1) {$Z_i'$};

\node[style=cstate, minimum width=2.5cm] (L) at (-.5,0) {$L$};

\draw (x1) to (out1);
\draw (x2) to (out2);
\draw (x3) to (out3);

\draw (x1) to (x1|-L.north);
\draw (x2) to (x2|-L.north);
\draw (x3) to (x3|-L.north);

\node (eq) at (3,.5) {\LARGE $\underset{N\sqrt{\epsilon}}{=\joinrel=}$};

\node (out4) at (4,1.5) {$A$};
\node (out5) at (5.5,1.5) {$B$};
\node (out6) at (7,1.5) {$C$};

\node[style=cstate, minimum width=2.5cm] (L2) at (5,.5) {$L$};

\draw (out4) to (out4|-L2.north);
\draw (out5) to (out5|-L2.north);
\draw (out6) to (out6|-L2.north);

\end{tikzpicture}
\end{eqnarray}
And, inequality (\ref{XYYineq}) also holds for
any permutation of the letters $A, B, C$.
\end{proposition}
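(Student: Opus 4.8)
The plan is to recognize that both displayed diagrams are nothing more than the graphical transcription of the algebraic bounds already collected in (\ref{keyineqs}), so that the entire content of the proof is a single square-root step together with a careful reading of the diagrammatic conventions. In a string diagram, stacking a box labeled $\xp_i$ on each of the three wires $A$, $B$, $C$ above the state $\ket{L}$ denotes the operator $\xp_{A,i}\xp_{B,i}\xp_{C,i}$ applied to $\ket{L}$ (one box per tensor factor), while the right-hand diagram, a bare triple of wires emanating from $-L$, denotes the vector $-\ket{L}$. Thus the first displayed equation asserts precisely that $\xp_{A,i}\xp_{B,i}\xp_{C,i}\ket{L} \underset{N\sqrt{\epsilon}}{=} -\ket{L}$, and (\ref{XYYineq}) asserts that $\xp_{A,i}\zp_{B,i}\zp_{C,i}\ket{L} \underset{N\sqrt{\epsilon}}{=} \ket{L}$.

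To prove the first, I would take the first line of (\ref{keyineqs}), namely $\bigl\| ( I + \xp_{A,i}\xp_{B,i}\xp_{C,i} ) \ket{L} \bigr\|^2 \leq {\cal O}(N^2\epsilon)$, and take square roots to obtain $\bigl\| ( I + \xp_{A,i}\xp_{B,i}\xp_{C,i} ) \ket{L} \bigr\| \leq {\cal O}(N\sqrt{\epsilon})$. Since $( I + \xp_{A,i}\xp_{B,i}\xp_{C,i} )\ket{L} = \xp_{A,i}\xp_{B,i}\xp_{C,i}\ket{L} - ( -\ket{L} )$, this is exactly the Euclidean-distance bound that defines the relation $\xp_{A,i}\xp_{B,i}\xp_{C,i}\ket{L} \underset{N\sqrt{\epsilon}}{=} -\ket{L}$, i.e.\ the first diagram. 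Applying the identical square-root step to the third line of (\ref{keyineqs}) yields $\bigl\| ( I - \xp_{A,i}\zp_{B,i}\zp_{C,i} ) \ket{L} \bigr\| \leq {\cal O}(N\sqrt{\epsilon})$, which is the content of (\ref{XYYineq}).

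For the permutation clause, I would observe that the second and fourth lines of (\ref{keyineqs}) are the bounds on $( I - \zp_{A,i}\zp_{B,i}\xp_{C,i} )\ket{L}$ and $( I - \zp_{A,i}\xp_{B,i}\zp_{C,i} )\ket{L}$, which are precisely the images of (\ref{XYYineq}) under the two nontrivial relabelings that move the single $\xp$ operator to register $C$ or to register $B$. The same square-root step turns each of these into the corresponding permuted diagram; since the lone $X$-type operator can occupy only the three positions $A$, $B$, $C$, these three cases exhaust all permutations of $A$, $B$, $C$.

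I do not expect any genuine mathematical obstacle here, since the underlying inequalities are already in hand and the argument is essentially a change of notation. The only points requiring care are bookkeeping ones: one must check that the constant absorbed by the ${\cal O}$ notation survives the square root, which it does because $\sqrt{{\cal O}(N^2\epsilon)} = {\cal O}(N\sqrt{\epsilon})$, and one must apply the diagrammatic conventions consistently (a box per tensor factor, scalars attached to the state) so that each picture is matched unambiguously with its intended operator expression.
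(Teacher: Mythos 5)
Your proof is correct and is exactly the argument the paper intends: Proposition~\ref{prop:switch multi} is stated immediately after (\ref{keyineqs}) as a ``translation into graphical form,'' which amounts to precisely your square-root step $\sqrt{{\cal O}(N^2\epsilon)} = {\cal O}(N\sqrt{\epsilon})$ applied to each line, with the permutation clause covered by the second and fourth lines of (\ref{keyineqs}) just as you say. Nothing is missing; you have also correctly observed that the two $Z'$ boxes being identical means only three of the six permutations yield distinct diagrams.
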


We will use Proposition~\ref{prop:switch multi} to prove the following assertion.

\begin{proposition}[Approximate anti-commutativity]
\label{prop:antimulti}
For any $i$, the following inequality holds:\vspace{0.3cm}\\
\begin{equation}
\begin{tikzpicture}
\node (out1) at (-1.5,3) {$A$};
\node (out2) at (0,3) {$B$};
\node (out3) at (1.5,3) {$C$};

\node[style=cprocess] (x1) at (-1.5,1.1) {$X_i'$};
\node[style=cprocess] (y1) at (-1.5,2.1) {$Z_i'$};

\node[style=cstate, minimum width=2.5cm] (L) at (-.5,0) {$L$};

\draw (y1) to (out1);
\draw (x1) to (y1);

\draw (x1) to (x1|-L.north);
\draw (out2) to (out2|-L.north);
\draw (out3) to (out3|-L.north);

\node (eq) at (3,.5) {\LARGE $\underset{N\sqrt{\epsilon}}{=\joinrel=}$};

\node (out4) at (4,3) {$A$};
\node (out5) at (5.5,3) {$B$};
\node (out6) at (7,3) {$C$};

\node[style=cprocess] (x4) at (4,2.1) {$X_i'$};
\node[style=cprocess] (y4) at (4,1.1) {$Z_i'$};

\node[style=cstate, minimum width=2.5cm] (L) at (5,0) {$-L$};

\draw (x4) to (out4);
\draw (y4) to (x4);

\draw (y4) to (x4|-L.north);
\draw (out5) to (out5|-L.north);
\draw (out6) to (out6|-L.north);

\end{tikzpicture}
\end{equation}
\end{proposition}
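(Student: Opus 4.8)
The plan is to derive the anti-commutation relation purely algebraically from the stabilizer-type relations of Proposition~\ref{prop:switch multi}, and then read the resulting chain of equalities back as a sequence of diagram manipulations. Fix $i$ and abbreviate $\xp_W := \xp_{W,i}$ and $\zp_W := \zp_{W,i}$ for $W \in \{A,B,C\}$; recall that each such operator is a reflection (so its square is $I$ and its operator norm is $1$) and that operators carrying different party labels commute. Proposition~\ref{prop:switch multi}, together with its permutation clause, supplies four relations anchored at $\ket L$:
\begin{align}
\xp_A\xp_B\xp_C\ket L &\underset{N\sqrt\epsilon}{=} -\ket L, &
\xp_A\zp_B\zp_C\ket L &\underset{N\sqrt\epsilon}{=} \ket L, \\
\zp_A\xp_B\zp_C\ket L &\underset{N\sqrt\epsilon}{=} \ket L, &
\zp_A\zp_B\xp_C\ket L &\underset{N\sqrt\epsilon}{=} \ket L.
\end{align}
Multiplying each relation on the left by the appropriate product of (norm-$1$) reflections, and using the fact from Section~\ref{sec approx} that composing with an operator of operator norm $\le 1$ preserves the $\underset{N\sqrt\epsilon}{=}$ relation, I would isolate the action of Alice's operators on $\ket L$:
\begin{align}
\xp_A\ket L &\underset{N\sqrt\epsilon}{=} -\xp_B\xp_C\ket L, &
\xp_A\ket L &\underset{N\sqrt\epsilon}{=} \zp_B\zp_C\ket L, \\
\zp_A\ket L &\underset{N\sqrt\epsilon}{=} \xp_B\zp_C\ket L, &
\zp_A\ket L &\underset{N\sqrt\epsilon}{=} \zp_B\xp_C\ket L.
\end{align}

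The heart of the argument is to compute the two operator orderings $\zp_A\xp_A\ket L$ and $\xp_A\zp_A\ket L$ along two \emph{different} routes, chosen so that exactly one of them picks up the sign coming from the all-$X$ relation. For the first ordering I would substitute $\xp_A\ket L \underset{N\sqrt\epsilon}{=} \zp_B\zp_C\ket L$, commute $\zp_A$ past the different-party operators until it acts directly on $\ket L$, and then substitute $\zp_A\ket L \underset{N\sqrt\epsilon}{=} \xp_B\zp_C\ket L$; after cancelling the repeated $\zp_C$ this yields
\[
\zp_A\xp_A\ket L \;\underset{N\sqrt\epsilon}{=}\; \zp_B\xp_B\ket L .
\]
For the second ordering I would instead substitute $\zp_A\ket L \underset{N\sqrt\epsilon}{=} \zp_B\xp_C\ket L$, commute $\xp_A$ to the right so that it acts directly on $\ket L$, and then apply the all-$X$ relation $\xp_A\ket L \underset{N\sqrt\epsilon}{=} -\xp_B\xp_C\ket L$; cancelling the repeated $\xp_C$ gives
\[
\xp_A\zp_A\ket L \;\underset{N\sqrt\epsilon}{=}\; -\zp_B\xp_B\ket L .
\]
Combining the two displays by transitivity, $\zp_A\xp_A\ket L \underset{N\sqrt\epsilon}{=} \zp_B\xp_B\ket L \underset{N\sqrt\epsilon}{=} -\xp_A\zp_A\ket L$, which is exactly the claimed inequality. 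Since each route uses a constant number of substitutions, each of error $\mathcal O(N\sqrt\epsilon)$, and the final step combines a constant number of such terms, the total error is $\mathcal O(N\sqrt\epsilon)$ as required.

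I expect the main obstacle to be organizing the substitutions so that the sign is produced rather than cancelled. The symmetric temptation --- evaluating both orderings using only the two sign-free relations --- leads to $\zp_A\xp_A\ket L \underset{N\sqrt\epsilon}{=} \zp_B\xp_B\ket L$ and $\xp_A\zp_A\ket L \underset{N\sqrt\epsilon}{=} \xp_B\zp_B\ket L$, which merely transports the (anti)commutator from Alice to Bob and yields a circular statement. The essential idea is therefore to break this symmetry by routing the $\xp_A\zp_A$ computation through the all-$X$ relation, which is the unique source of the minus sign. A secondary point requiring care is bookkeeping: every relation above is anchored at the specific vector $\ket L$, so before each substitution I must commute the operator being replaced, using cross-party commutativity, so that it acts directly on $\ket L$; in the diagrammatic presentation this bookkeeping is precisely the sliding of boxes along the $A$, $B$, and $C$ wires, and the cancellations of repeated reflections are the straightening of doubled boxes on a single wire.
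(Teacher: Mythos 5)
Your proposal is correct and takes essentially the same route as the paper's proof: both chain the relations of Proposition~\ref{prop:switch multi} (using exact cross-party commutativity and cancellation of repeated reflections) to transport the operator pair off wire $A$ and back, with the minus sign sourced from the all-$X$ relation and a constant number of ${\cal O}(N\sqrt{\epsilon})$ substitutions. The only difference is organizational --- the paper runs a single chain cycling the pair through wires $C$ and $B$, while you run two chains from $\zp_A\xp_A\ket{L}$ and $\xp_A\zp_A\ket{L}$ that meet at $\zp_B\xp_B\ket{L}$ --- which is the same argument.
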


\begin{proof}
Repeatedly applying Proposition~\ref{prop:switch multi},
\begin{equation}
\begin{array}{ccccc}
\vcenter{\hbox{
\begin{tikzpicture}
\node (out1) at (-1.25,3) {$A$};
\node (out2) at (0,3) {$B$};
\node (out3) at (1.25,3) {$C$};

\node[style=cprocess] (x1) at (-1.25,2) {$Z_{i}'$};

\node[style=cprocess] (y1) at (-1.25,1) {$X_{i}'$};

\node[style=cstate, minimum width=2cm] (L) at (-.5,0) {$L$};

\draw (out1) to (x1) to (y1) to (y1|-L.north);

\draw (out2) to (out2|-L.north);

\draw (out3) to (out3|-L.north);
\end{tikzpicture}
}}
& \mbox{\LARGE $\underset{N\sqrt{\epsilon}}{=\joinrel=}$} &
\vcenter{\hbox{
\begin{tikzpicture}
\node (out1) at (-1.5,2) {$A$};
\node (out2) at (0,2) {$B$};
\node (out3) at (1.5,2) {$C$};


\node[style=cprocess] (y1) at (-1.5,1) {$Z_{i}'$};
\node[style=cprocess] (y2) at (0,1) {$X_{i}'$};
\node[style=cprocess] (y3) at (1.5,1) {$X_{i}'$};

\node[style=cstate, minimum width=2.5cm] (L) at (-.5,0) {$-L$};

\draw (out1) to (y1) to (y1|-L.north);

\draw (out2) to (y2) to (y2|-L.north);

\draw (out3) to (y3) to (y3|-L.north);
\end{tikzpicture}
}}
& \mbox{\LARGE $\underset{N\sqrt{\epsilon}}{=\joinrel=}$} &
\vcenter{\hbox{
\begin{tikzpicture}
\node (out1) at (-1.5,3) {$A$};
\node (out2) at (0,3) {$B$};
\node (out3) at (1.5,3) {$C$};

\node[style=cprocess] (x2) at (0,2) {$X_{i}'$};
\node[style=cprocess] (x3) at (1.5,2) {$X_{i}'$};

\node[style=cprocess] (y2) at (0,1) {$X_{i}'$};
\node[style=cprocess] (y3) at (1.5,1) {$Z_{i}'$};

\node[style=cstate, minimum width=2.5cm] (L) at (-.5,0) {$-L$};

\draw (out1) to (out1|-L.north);

\draw (out2) to (x2) to (y2) to (y2|-L.north);

\draw (out3) to (x3) to (y3) to (y3|-L.north);
\end{tikzpicture}
}} 
\\
& \mbox{\LARGE $\underset{N\sqrt{\epsilon}}{=\joinrel=}$} &
\vcenter{\hbox{
\begin{tikzpicture}
\node (out1) at (-1.25,3) {$A$};
\node (out2) at (0,3) {$B$};
\node (out3) at (1.25,3) {$C$};

\node[style=cprocess] (x3) at (1.25,2) {$X_{i}'$};

\node[style=cprocess] (y3) at (1.25,1) {$Z_{i}'$};

\node[style=cstate, minimum width=2cm] (L) at (-.5,0) {$-L$};

\draw (out1) to (out1|-L.north);

\draw (out2) to (out2|-L.north);

\draw (out3) to (x3) to (y3) to (y3|-L.north);
\end{tikzpicture}
}}
\end{array}
\end{equation}
(Here we have
used the fact that all of the unitary maps in these diagrams are self-inverse.)  Applying the same steps symmetrically across the wires $A, B, C$,
\begin{equation}
\begin{array}{ccccc}
\vcenter{\hbox{
\begin{tikzpicture}
\node (out1) at (-1.25,3) {$A$};
\node (out2) at (0,3) {$B$};
\node (out3) at (1.25,3) {$C$};

\node[style=cprocess] (x3) at (1.25,2) {$X_{i}'$};

\node[style=cprocess] (y3) at (1.25,1) {$Z_{i}'$};

\node[style=cstate, minimum width=2cm] (L) at (-.5,0) {$-L$};

\draw (out1) to (out1|-L.north);

\draw (out2) to (out2|-L.north);

\draw (out3) to (x3) to (y3) to (y3|-L.north);
\end{tikzpicture}
}} & \mbox{\LARGE $\underset{N\sqrt{\epsilon}}{=\joinrel=}$} &
\vcenter{\hbox{
\begin{tikzpicture}
\node (out1) at (-1.25,3) {$A$};
\node (out2) at (0,3) {$B$};
\node (out3) at (1.25,3) {$C$};

\node[style=cprocess] (x2) at (0,2) {$Z_{i}'$};

\node[style=cprocess] (y2) at (0,1) {$X_{i}'$};

\node[style=cstate, minimum width=2cm] (L) at (-.5,0) {$L$};

\draw (out1) to (out1|-L.north);

\draw (out2) to (x2) to (y2) to (y2|-L.north);

\draw (out3) to (out3|-L.north);
\end{tikzpicture}
}} & \mbox{\LARGE $\underset{N\sqrt{\epsilon}}{=\joinrel=}$} &
\vcenter{\hbox{
\begin{tikzpicture}
\node (out1) at (-1.25,3) {$A$};
\node (out2) at (0,3) {$B$};
\node (out3) at (1.25,3) {$C$};

\node[style=cprocess] (x1) at (-1.25,2) {$X_{j}'$};

\node[style=cprocess] (y1) at (-1.25,1) {$Z_{i}'$};

\node[style=cstate, minimum width=2cm] (L) at (-.5,0) {$-L$};

\draw (out1) to (x1) to (y1) to (y1|-L.north);

\draw (out2) to (out2|-L.north);

\draw (out3) to (out3|-L.north);
\end{tikzpicture}
}}
\end{array}
\end{equation}
as desired.
\end{proof}


\begin{proposition}[Approximate commutativity]
\label{prop:app commutativity}
The following equation holds for any distinct indices $i, j \in \{ 1, 2, \ldots, N \}$ and any bits $b,c\in\{0,1\}$:\vspace{0.3cm}\\
\begin{center}
\begin{tikzpicture}
\node (out1) at (-1.5,3) {$A$};
\node (out2) at (0,3) {$B$};
\node (out3) at (1.5,3) {$C$};

\node[style=cprocess] (j1) at (-1.5,2.2) {$R_{j \to c}$};
\node[style=cprocess] (i1) at (-1.5,1.1) {$R_{i \to b}$};

\node[style=cstate, minimum width=2.5cm] (L) at (-.5,0) {$L$};

\draw (j1) to (out1);
\draw (i1) to (j1);

\draw (i1) to (i1|-L.north);
\draw (out2) to (out2|-L.north);
\draw (out3) to (out3|-L.north);

\node (eq) at (2.75,1) {\LARGE $\underset{N \sqrt{\epsilon}}{=\joinrel=}$};

\node (out4) at (4,3) {$A$};
\node (out5) at (5.5,3) {$B$};
\node (out6) at (7,3) {$C$};

\node[style=cprocess] (j2) at (4,1.1) {$R_{j \to c}$};
\node[style=cprocess] (i2) at (4,2.2) {$R_{i \to b}$};

\node[style=cstate, minimum width=2.5cm] (L2) at (5,0) {$L$};

\draw (i2) to (out4);
\draw (j2) to (i2);

\draw (j2) to (j2|-L.north);
\draw (out5) to (out5|-L.north);
\draw (out6) to (out6|-L.north);
\end{tikzpicture}
\end{center}

\end{proposition}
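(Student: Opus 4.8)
The plan is to reduce the claim to the \emph{exact} commutation of the two simultaneously-measured reflections in (\ref{simulmeas}), using the switch relations of Proposition~\ref{prop:switch multi} to move operators between the three parties. Throughout, write $P = R^A_{i \to b}$ and $Q = R^A_{j \to c}$ for the two singleton reflections on wire $A$, and $\tilde P = R^A_{i \to b \mid j \to c}$, $\tilde Q = R^A_{j \to c \mid i \to b}$ for their doubled (simultaneous-measurement) counterparts; by hypothesis $\tilde P \tilde Q = \tilde Q \tilde P$ exactly. We must show $Q P \ket{L} \underset{N\sqrt\epsilon}{=} P Q \ket{L}$.

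First I would record the \emph{transfer} relations. Each inequality in (\ref{keyineqs}) expresses that a product of three same-round reflections (one per party) sends $\ket{L}$ to $\pm\ket{L}$; since every reflection is self-inverse and of operator norm $1$, multiplying through lets me rewrite a single wire-$A$ reflection acting on $\ket{L}$ as a sign times a product $N_{BC}$ (for round $i$) or $M_{BC}$ (for round $j$) of the corresponding reflections on wires $B$ and $C$, to order $N\sqrt\epsilon$. The remark following (\ref{keyineqs}) guarantees that the \emph{same} transfer relation holds with $P$ replaced by $\tilde P$ and with $Q$ replaced by $\tilde Q$; this is the only place the doubled operators enter the transfer step, and it is what ties each singleton to its double.

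The core of the argument is the identity
\[
Q P \ket{L} \;\underset{N\sqrt\epsilon}{=}\; \tilde Q \tilde P \ket{L},
\]
which I would prove by processing the two wire-$A$ reflections from the inside out, always transferring the reflection that is \emph{adjacent} to $\ket{L}$. Concretely: transfer the inner reflection $P$ (which sits on $\ket{L}$) to obtain $N_{BC}$; then slide the outer reflection $Q$, still on wire $A$, past $N_{BC}$, which lives on wires $B,C$, using \emph{exact} cross-party commutation, so that $Q$ becomes adjacent to $\ket{L}$; finally transfer $Q$ to obtain $M_{BC}$. This rewrites $QP\ket{L}$ as a sign times $N_{BC} M_{BC} \ket{L}$. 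Running the identical computation on $\tilde Q \tilde P \ket{L}$, using that $\tilde P,\tilde Q$ obey the same transfer relations and the same exact cross-party commutation, produces the same state, which gives the displayed identity. The mirror-image computation gives $PQ\ket{L} \underset{N\sqrt\epsilon}{=} \tilde P \tilde Q \ket{L}$.

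Chaining these finishes the proof: $QP\ket{L} \underset{N\sqrt\epsilon}{=} \tilde Q \tilde P\ket{L} = \tilde P \tilde Q\ket{L} \underset{N\sqrt\epsilon}{=} PQ\ket{L}$, where the middle equality is the exact commutation from (\ref{simulmeas}). I expect the main obstacle, and the reason a naive approach fails, to be the following: transferring \emph{both} singletons to parties $B,C$ and stripping them would only reduce the claim to commutativity of $N_{BC}$ and $M_{BC}$ on $\ket{L}$, i.e. to the very same different-round commutativity transplanted onto wires $B$ and $C$, which is circular. The exact commutation of the simultaneously-measured pair $\tilde P,\tilde Q$ (flagged as critical where (\ref{simulmeas}) was introduced) is precisely what breaks the circularity, and the discipline of only ever transferring the operator adjacent to $\ket{L}$ is what keeps every application of a switch relation legitimate.
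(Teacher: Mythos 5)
Your proposal is correct and takes essentially the same route as the paper's proof in Appendix~\ref{appendix comm proofs}: there too the inner reflection is transferred to wires $B,C$ via (\ref{keyineqs}), the outer wire-$A$ reflection is slid across by exact cross-party commutation, the variants of (\ref{keyineqs}) are used to pull both operators back as the doubled reflections $R^A_{i \to b \mid j \to c}$ and $R^A_{j \to c \mid i \to b}$, and the argument closes with their exact commutation. Your meet-in-the-middle phrasing at $\pm N_{BC} M_{BC}\ket{L}$ is just a repackaging of the paper's single diagram chain $Z'_j X'_i \ket{L} \underset{N\sqrt{\epsilon}}{=} Z'_{j \mid i \to 0} X'_{i \mid j \to 1}\ket{L}$ (carried out there for $b=0$, $c=1$, with the other cases declared analogous, which your uniform sign bookkeeping handles explicitly).
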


The proof of Proposition~\ref{prop:app commutativity} is given 
in appendix~\ref{appendix comm proofs}, and is based
on the fact that the related
reflections $R_{i \to b
\mid j \to c}$ and $R_{j \to c \mid i \to b}$ commute.

\subsection{Isometries}

Next we define the local isometries which will relate the state $L$ to the ideal state
$\left| GHZ \right>^{\otimes N}$.  Roughly speaking, for each
$k \in \{ 1, 2, \ldots, N \}$
and $W \in \{ A, B, C \}$, we
will construct an isometry $\Psi_{W, k}$ which approximately ``locates'' a qubit with the $i$th players' system, and swaps it out onto a qubit register $Q \cong \mathbb{C}^2$.  We then apply these isometries in order, for $1, 2, \ldots, N$, the system
$W$.  We will use approximate commutativity to show that the different isometries $\Psi_{W, k}$ do not interfere much with one another when applied in sequence.
Our approach borrows from previous works on rigidity, and uses similar notation to that of the non-graphical rigidity proof in our previous paper \cite{kalev2017rigidity}.

In the following, we use  controlled unitary gates, $\mathbf{C}(U)$, which are defined and discussed in Appendix \ref{appendix cu}.  
Let $H \colon 
\mathbb{C}^2 \to \mathbb{C}^2$ denote the Hadamard gate 
$[ \left| 0 \right> \mapsto
\left| + \right>, \left| 1 \right>
\mapsto \left| - \right>]$.  
We define isometries 
$\Psi_{A, k}$ on the system
$A$ which involve preparing
a Bell state (denoted by a gray node) and then
performing an approximate ``swap'' procedure
between one half of the Bell state and $A$.
(This is based on \cite{mckague2016interactive}.)  Then we define an isometry $\Theta_{A,k}$ on $A$ which chains together the swapping maps $\Psi_{A, 1}, \ldots, \Psi_{A,k}$.

\begin{definition}[Swapping maps]
\label{def:swapmap}
For each $k \in \{ 1, 2, \ldots, N \}$, let $Q_k$ and $\overline{Q}_k$ denote qubit
registers, and define an isometry $\Psi_{A,k}$ as follows (suppressing the label $A$ when it is not necessary):

\begin{equation}
\label{swapmap}
\begin{array}{ccc}
\vcenter{\hbox{
\begin{tikzpicture}

\node (in) at (0,-1) {$A$};

\node (out1) at (-.5,1) {$\overline{Q}_k$};
\node (out3) at (0,1) {$Q_k$};
\node (out2) at (.5,1) {$A$};

\node[style=cprocess,minimum width = 1.5cm] (psi) at (0,0) {$\Psi_{k}$};

\draw (in) to (in|-psi.south);

\draw (out1) to (out1|-psi.north);
\draw (out2) to (out2|-psi.north);
\draw (out3) to (out3|-psi.north);

\end{tikzpicture}
}}
& \mbox{\LARGE $:=$} &
\vcenter{\hbox{
\begin{tikzpicture}
\node (in) at (.5,-2.5) {$A$};

\node[style=bell] (bell) at (-1.2,-2.5) {};

\node (out1) at (-.5,3.5) {$Q_k$};
\node (out2) at (.5,3.5) {$A$};

\node[style=cprocess,minimum width = 1.5cm] (CX) at (0,0.5) {$\mathbf{C}(Z_k')$};
\node[style=cprocess] (S) at (-.5,1.5) {$H$};
\node[style=cprocess,minimum width = 1.5cm] (CY) at (0,2.5) {$\mathbf{C}(X_k')$};

\node[style=cprocess] (S2) at (-.5,-0.5) {$H$};

\node[style=cprocess,minimum width = 1.5cm] (CZ) at (0,-1.5) {$\mathbf{C}(X_k')$};

\draw (in) to (in|-CZ.south);

\draw (in|-CX.north) to (in|-CY.south);
\draw (out2) to (out2|-CY.north);

\draw (in|-CZ.north) to (in|-CX.south);

\draw (bell) to[in=-90, out=0] (CZ.220);
\draw (S) to (S|-CX.north);
\draw (S) to (S|-CY.south);
\draw (out1) to (out1|-CY.north);

\draw (S2) to (S2|-CX.south);
\draw (S2) to (S2|-CZ.north);

\node(out3) at (-1.2,3.5) {$\overline{Q}_k$};

\draw (bell) to (out3);

\node(empty) at (0,-3.2) {};

\end{tikzpicture}
}} 
\end{array}
\end{equation}

Let $\mathbf{Q}_k = \overline{Q}_k \otimes 
Q_k$ and $\mathbf{Q}_{1 \ldots k}
= \mathbf{Q}_1 \otimes \cdots
\otimes \mathbf{Q}_k$. 
 Define
an isometry $\Theta_{k, A}$ by
\begin{equation}
\begin{array}{ccc}
\vcenter{\hbox{
\begin{tikzpicture}

\node (out1) at (-.35,1) {$\mathbf{Q}_{1 \ldots k }$};
\node (out2) at (.5,1) {$A$};

\node[style=cprocess,minimum width=1.5cm] (psi) at (0,0) {$\Theta_{k}$};

\node (in) at (0,-1) {$A$};

\draw (out1) to (out1|-psi.north);
\draw (out2) to (out2|-psi.north);

\draw (in) to (in|-psi.south);

\end{tikzpicture}
}}
& \mbox{\LARGE $:=$} &
\vcenter{\hbox{
\begin{tikzpicture}

\node (in) at (0,-1) {$A$};

\node[style=cprocess] (psi1) at (0,0) {$\Psi_{1}$};
\node[style=cprocess] (psi2) at (.5,1) {$\Psi_{2}$};
\node[style=cprocess] (psi3) at (1,2.75) {$\Psi_{k}$};

\node (out1) at (-1.5,3.75) {$\mathbf{Q}_1$};
\node (out2) at (-.75,3.75) {$\mathbf{Q}_2$};
\node (out3) at (.875,3.75) {$\mathbf{Q}_k$};
\node (out4) at (1.5,3.75) {$A$};

\node (dots1) at (.125,3.75) {$\cdots$};
\node (dots2) at (1,1.85) {$\vdots$};

\draw (in) to (psi1);
\draw (psi2) to (psi2|-psi1.north);
\draw (dots2) to (psi3|-psi2.north);
\draw (psi3) to (psi3|-dots2.north);

\draw (out4) to (out4|-psi3.north);
\draw (out3) to (out3|-psi3.north);

\draw (out2) to[in=90,out=-90] (psi2.120);
\draw (out1) to[in=90,out=-90] (psi1.120);
\end{tikzpicture}
}}
\end{array}
\end{equation}
Let $Q_{N+1}, \ldots, Q_{3N}, \overline{Q}_{N+1}, \ldots, \overline{Q}_{3N}$ be qubit registers, and
define $\Psi_{B, k}$ analogously as an isometry from $B$ to $B \otimes \overline{Q}_{N + k} \otimes 
Q_{N+k}$.  Define
$\Psi_{C, k}$ analogously as an isometry
from $C$ to $C \otimes \overline{Q}_{2N + k} \otimes 
Q_{2N+k}$.  Define composite maps
$\Theta_{B, k}$ 
and $\Theta_{C,k}$
similarly in terms of $\Psi_{B, k}$
and $\Psi_{C, k}$.
\end{definition}

\subsection{Commutativity properties}
\label{sec:comm}

We now investigate some approximate (anti-)commutativity relationships between the Pauli operators on $Q$, the reflection strategies for $A$, $B$ and $C$, and the isometries defined in the last section.

We begin with the following definition and lemma, which are crucial.

\begin{definition}
Let $R, S$ be registers and let
$Z \in R \otimes S$ be a unit vector.  If a unitary map $U \colon R \to R$ is such that
there exists another unitary map $V \colon S \to S$ satisfying
\begin{equation}
\begin{array}{ccc}
\vcenter{\hbox{
\begin{tikzpicture}
\node[style=cstate] (z) at
(0,0) {$~~~~Z~~~~~$};
\node[style=cprocess] (u)
at (-0.3,1) {$U$};
\draw (u) to (u|-z.north);
\node (r) at (-0.3,2) {$R$};
\node (s) at (0.5,2) {$S$};
\draw (s) to (s|-z.north);
\draw (r) to (r|-u.north);
\end{tikzpicture}}}
& \vcenter{\hbox{{\LARGE $\underset{ \delta}{=\joinrel=}$}}} & 
\vcenter{\hbox{
\begin{tikzpicture}
\node[style=cstate] (z) at
(0,0) {$~~~~Z~~~~~$};
\node[style=cprocess] (v)
at (0.5,1) {$V$};
\draw (r) to (r|-z.north);
\node (r) at (-0.3,2) {$R$};
\node (s) at (0.5,2) {$S$};
\draw (s) to (s|-v.north);
\draw (v) to (v|-z.north);
\end{tikzpicture}}}
\end{array}
\end{equation}
the we say that \textnormal{$U$
can be pushed through $Z$ with
error term $\delta$}.
\end{definition}

\begin{lemma}[Push Lemma]
\label{lemma:push}
Suppose that $R, S$ are registers,
$Z \in R \otimes S$ is a unit vector, and $V, W, U_1, U_2, \ldots, U_k$ are unitary operators
on $R$ such that
\begin{enumerate}
\item Each map $U_i$ can be pushed
through $Z$ with error term $\epsilon$, and
\item The approximate
equality
$(V \otimes I_S)L \underset{\delta}{=} (W \otimes I_S)L$ holds.
\end{enumerate}
Then,
\begin{equation}
\begin{array}{ccc}
\vcenter{\hbox{
\begin{tikzpicture}
\node[style=cstate] (z) at
(0,0) {$~~~~Z~~~~~$};
\node[style=cprocess] (uk)
at (-0.3,1) {$U_k$};
\draw (uk) to (uk|-z.north);
\node[style=cprocess] (u1)
at (-0.3,3) {$U_1$};
\draw (uk) to (uk|-z.north);
\node (dots) at (-0.3,2) {$\vdots$};
\draw (u1) to (u1|-dots.north);
\draw (dots) to (dots|-uk.north);
\node[style=cprocess] (v) at (-0.3,4) {$V$};
\draw (v) to (v|-u1.north);
\node (r) at (-0.3,5) {$R$};
\node (s) at (0.5,5) {$S$};
\draw (s) to (s|-z.north);
\draw (r) to (r|-v.north);
\end{tikzpicture}}}
& \vcenter{\hbox{{\LARGE $\underset{k\epsilon + \delta}{=\joinrel=}$}}} & 
\vcenter{\hbox{
\begin{tikzpicture}
\node[style=cstate] (z) at
(0,0) {$~~~~Z~~~~~$};
\node[style=cprocess] (uk)
at (-0.3,1) {$U_k$};
\draw (uk) to (uk|-z.north);
\node[style=cprocess] (u1)
at (-0.3,3) {$U_1$};
\draw (uk) to (uk|-z.north);
\node (dots) at (-0.3,2) {$\vdots$};
\draw (u1) to (u1|-dots.north);
\draw (dots) to (dots|-uk.north);
\node[style=cprocess] (w) at (-0.3,4) {$W$};
\draw (w) to (w|-u1.north);
\node (r) at (-0.3,5) {$R$};
\node (s) at (0.5,5) {$S$};
\draw (s) to (s|-z.north);
\draw (r) to (r|-v.north);
\end{tikzpicture}}}
\end{array}
\end{equation}
\end{lemma}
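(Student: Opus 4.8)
The plan is to use the first hypothesis to relocate every $U_i$ from the $R$ factor onto the $S$ factor. The defining property of \emph{pushing $U_i$ through $Z$} is that $(U_i \otimes I_S)Z$ can be replaced, up to error $\epsilon$, by $(I_R \otimes V_i)Z$ for some unitary $V_i$ on $S$. Once all the $U_i$ have been moved onto $S$, the only $R$-operator remaining between the top wire and $Z$ is $V$ (respectively $W$), so the second hypothesis equates the two sides directly. The scheme is therefore: push the stack $U_1, \ldots, U_k$ through $Z$ onto $S$; apply the second hypothesis to exchange $V$ for $W$; then undo the relocation.

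In detail, for each $i$ choose a unitary $V_i$ on $S$ with $(U_i \otimes I_S)Z \underset{\epsilon}{=} (I_R \otimes V_i)Z$. Pushing the operators through $Z$ one at a time, beginning with the one nearest $Z$, I obtain $(U_1 \cdots U_k \otimes I_S)Z \underset{k\epsilon}{=} (I_R \otimes \tilde V)Z$, where $\tilde V = V_k \cdots V_1$ is unitary on $S$: at each step the $V_i$ produced so far sit on the $S$ wire, commute exactly past the operator currently being pushed, and (being norm $1$) leave the incoming error unchanged, so repeated use of the observation at the end of Section~\ref{sec approx} together with transitivity gives the cumulative bound $k\epsilon$. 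Left-applying the norm-$1$ maps $V \otimes I_S$ and $W \otimes I_S$ to this single approximate equality then yields $(V U_1 \cdots U_k \otimes I_S)Z \underset{k\epsilon}{=} (V \otimes \tilde V)Z$ and $(W \otimes \tilde V)Z \underset{k\epsilon}{=} (W U_1 \cdots U_k \otimes I_S)Z$, which are the left- and right-hand diagrams of the claim.

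It remains to bridge the two middle terms. Writing $(V \otimes \tilde V)Z = (I_R \otimes \tilde V)(V \otimes I_S)Z$ and similarly for $W$, and applying the norm-$1$ map $I_R \otimes \tilde V$ to the second hypothesis $(V \otimes I_S)Z \underset{\delta}{=} (W \otimes I_S)Z$, I get $(V \otimes \tilde V)Z \underset{\delta}{=} (W \otimes \tilde V)Z$. Chaining the three approximations by transitivity produces total error $O(k\epsilon + \delta)$, as desired. The one point requiring care — and the real content of the lemma — is that the error grows only \emph{linearly} in $k$. This is precisely what the exact commutativity of operators supported on disjoint tensor factors and the norm-$1$ invariance of the relation $\underset{\cdot}{=}$ buy us: each of the $k$ pushes contributes an independent $\epsilon$ and cannot amplify the errors already incurred, while the second hypothesis contributes a single $\delta$, so nothing worse than $k\epsilon + \delta$ can arise.
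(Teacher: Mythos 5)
Your proof is correct and takes essentially the same route as the paper's: push the stack $U_1,\ldots,U_k$ through $Z$ onto the $S$ factor (error $k\epsilon$, justified by exact commutation of operators on disjoint tensor factors and the norm-$1$ invariance of $\underset{\cdot}{=}$), exchange $V$ for $W$ via the second hypothesis (error $\delta$), and push back, chaining by transitivity for a total of $O(k\epsilon+\delta)$. The only cosmetic difference is that the paper keeps $V$ on the $R$ wire throughout the inductive pushes rather than applying it after, which is the same computation.
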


The Push Lemma follows from
an easy inductive argument, and is given in the appendix.
Note that by Proposition~\ref{prop:switch multi}, for any $k$ we have
\begin{eqnarray}
X'_{A, k} L & \underset{N \sqrt{\epsilon}}{=\joinrel=} & (-X'_{B, k} \otimes X'_{C, k}) L \\
Z'_{A, k} L & \underset{N \sqrt{\epsilon}}{=\joinrel=} & (Z'_{B, k} \otimes X'_{C, k}) L,
\end{eqnarray}
and so all of the maps $X'_{\cdot, k}$ and
$Z'_{\cdot k}$ can be pushed through $L$
with error term $ N \sqrt{\epsilon}$.  This fact underlies the proofs of the next two results, which are proved in the appendix
using a combination of the Push Lemma,
and the approximate commutativity and anti-commutativity properties of maps $X'_{\cdot k}$ and $Z'_{\cdot k}$. 

\begin{proposition}
\label{prop:correct pauli}
For $k \in \{ 1, 2, \ldots, N \}$, let
$X_{A,k}$ and $Z_{A,k}$ denote the Pauli
operators on $Q_k$.  Then,
\begin{equation}
\begin{array}{ccccc}
\vcenter{\hbox{
\begin{tikzpicture}

\node (out1) at (-1.5,3) {$\mathbf{Q}_{k}$};
\node (out2) at (-.5,3) {$A$};
\node (out3) at (.5,3) {$B$};
\node (out4) at (1.5,3) {$C$};

\node[style=cprocess] (X) at (-1.5,2) {$X_{k}$}; 

\node[style=cprocess, minimum width =1.5cm] (psi) at (-1,1) {$\Psi_{k}$};

\node[style=cstate,minimum width=2.5cm] (L) at (-.5,0) {$L$};

\draw (out1) to[out=-90,in=90] (out1|-X.north);
\draw (X) to (X|-psi.north);
\draw (out2) to (out2|-psi.north);
\draw (out3) to (out3|-L.north);
\draw (out4) to (out4|-L.north);

\draw (psi) to (psi|-L.north);

\end{tikzpicture}
}}
& \mbox{\LARGE $\underset{N\sqrt{\epsilon}}{=\joinrel=}$} &
\vcenter{\hbox{
\begin{tikzpicture}

\node (out1) at (-1.5,3) {$\mathbf{Q}_k$};
\node (out2) at (-.5,3) {$A$};
\node (out3) at (.5,3) {$B$};
\node (out4) at (1.5,3) {$C$};

\node[style=cprocess, minimum width =1.5cm] (psi) at (-1,2) {$\Psi_{k}$};

\node[style=cprocess] (X) at (-1,1) {$X'_{k}$}; 

\node[style=cstate,minimum width=2.5cm] (L) at (-.5,0) {$L$};

\draw (out1) to[out=-90,in=90] (out1|-psi.north);
\draw (out2) to (out2|-psi.north);
\draw (out3) to (out3|-L.north);
\draw (out4) to (out4|-L.north);

\draw (psi) to (X) to (psi|-L.north);

\end{tikzpicture}}}
\end{array}
\end{equation}
and similarly for $Z'_{k}$.  Likewise,
define $\{ X_{B, k} , Z_{B, k} \}$ 
to be the Pauli operators
on $Q_{N+k}$ and 
define $\{ X_{C, k}, Z_{C, k} \}$ to
be the Pauli operators on $Q_{2N+k}$.  Analogous statements
hold for $\Psi_{B, k}$ and $\Psi_{C, k}$.
\end{proposition}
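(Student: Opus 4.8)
The plan is to establish the identity for a single swapping map $\Psi_{A,k}$ with the $X$-Pauli first; the $Z$-Pauli case, and the analogues for $\Psi_{B,k}$ and $\Psi_{C,k}$, then follow by identical manipulations together with the permutation symmetry of $\overline{GHZ}_N$ and the permuted versions of Proposition~\ref{prop:antimulti}. First I would replace $\Psi_{A,k}$ by its defining circuit from Definition~\ref{def:swapmap}: a Bell-state preparation on $\overline{Q}_k\otimes Q_k$ followed by the network $\mathbf{C}(X'_k)\,H\,\mathbf{C}(Z'_k)\,H\,\mathbf{C}(X'_k)$ with control wire $Q_k$ and target $A$. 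The Pauli $X_k$ in the left-hand diagram sits on $Q_k$ above this network, and the idea is to transport it downward until it reaches $L$. Every crossing in this stage is governed by an \emph{exact} relation: a control-wire Pauli obeys $X_k\,\mathbf{C}(U)=(I\otimes U)\,\mathbf{C}(U)\,X_k$ for any reflection $U$ and commutes with $\mathbf{C}(U)$ when it is the matching Pauli; the Hadamards exchange $X_k\leftrightarrow Z_k$ via $HXH=Z$, $HZH=X$; and $X'_{A,k}$ commutes with $\mathbf{C}(X'_k)$ (using $(X'_{A,k})^2=I$) and with the Bell preparation. Because the Hadamards toggle the Pauli type between successive controlled gates, the descending Pauli passes each gate by one of these exact relations (either commuting through it, or conjugating through it and depositing a target factor $I\otimes X'_{A,k}$), so this entire stage incurs no error.

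What remains after the pushing is an expression whose only discrepancy from the desired right-hand side $\Psi_{A,k}\,X'_{A,k}$ is the order in which the target reflections $X'_{A,k}$ and $Z'_{A,k}$ appear, mediated by the interior controlled-$Z'_{A,k}$ gate. In other words, the claim reduces to the algebraic identity $X_k\,\Psi_{A,k}=\Psi_{A,k}\,X'_{A,k}$ that would hold exactly if $X'_{A,k},Z'_{A,k}$ were genuine anticommuting reflections; verifying it amounts to commuting an $X'_{A,k}$ past a $Z'_{A,k}$ with a sign flip a constant number of times. In our setting this anticommutation holds only approximately, and only as an action on the shared state, via Proposition~\ref{prop:antimulti}: $X'_{A,k}Z'_{A,k}L\underset{N\sqrt{\epsilon}}{=}-\,Z'_{A,k}X'_{A,k}L$. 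The final step is therefore to apply each needed anticommutation directly against $L$, and this is where Lemma~\ref{lemma:push} does the work: using the observation recorded above the proposition that every $X'_{\cdot,k}$ and $Z'_{\cdot,k}$ can be pushed through $L$ with error $N\sqrt{\epsilon}$ (a consequence of Proposition~\ref{prop:switch multi}), I can slide the target operators coming from the controlled gates down onto $L$, substitute the approximate anticommutation relation there, and slide them back, accumulating only a constant number of $N\sqrt{\epsilon}$ terms.

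The main obstacle is precisely this last reconciliation. The anticommutation relation is a statement about $L$, not an operator identity, so it cannot simply be invoked in the middle of the network, where the freshly prepared Bell qubit has already become entangled with the target $A$. The care required is in organizing the manipulation so that $X'_{A,k}$ and $Z'_{A,k}$ act on $L$ at the exact moment the relation is used --- keeping the Clifford gates and the Bell preparation factored out on the correct side, which is legitimate because they commute with $X'_{A,k}$ or act on disjoint registers --- and in checking that only ${\cal O}(1)$ invocations of Proposition~\ref{prop:antimulti} are incurred, so that the total error stays ${\cal O}(N\sqrt{\epsilon})$ rather than growing with the number of rounds. Once the single-$k$, single-Pauli statement is in hand, the $Z'_{k}$ case is obtained by the same pushing (now depositing $I\otimes Z'_{A,k}$ factors), and the statements for $\Psi_{B,k}$ and $\Psi_{C,k}$ follow by symmetry.
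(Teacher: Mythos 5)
Your outline tracks the paper's proof at the level of strategy: expand $\Psi_{A,k}$ into its defining circuit, transport the Pauli downward using the exact control-wire relations for $\mathbf{C}(U)$ and the Hadamard toggling $X\leftrightarrow Z$, invoke the state-dependent anticommutativity of Proposition~\ref{prop:antimulti} a constant number of times via Lemma~\ref{lemma:push}, and conclude the remaining cases by symmetry. You also correctly isolate the crux --- that Proposition~\ref{prop:antimulti} is a statement about $L$, not an operator identity, and cannot be invoked in the middle of a network where $Q_k$ is already entangled with $A$. However, the resolution you propose for that crux is flawed as stated, and this is a genuine gap rather than a presentational one.

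You claim the lower gates can be ``factored out on the correct side \ldots because they commute with $X'_{A,k}$ or act on disjoint registers.'' Neither disjunct applies to the gates that actually block the way. The relation you need to substitute is a \emph{controlled} anticommutation on $Q_k\otimes A$, namely $(I\otimes X'_{k})\,\mathbf{C}(Z'_k)\,(\Phi^+\otimes L)\ \underset{N\sqrt{\epsilon}}{=}\ \mathbf{C}(-Z'_k)\,(I\otimes X'_{k})\,(\Phi^+\otimes L)$; the lower Hadamard sits on the control wire $Q_k$, so it is \emph{not} on a disjoint register relative to this relation, and the interior $\mathbf{C}(Z'_k)$ is precisely the gate that fails to commute with $X'_{A,k}$ --- that is where the sign is generated. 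Moreover, your description inverts the Push Lemma's pattern: one does not ``slide the target operators down onto $L$''; one pushes the \emph{intervening} gates through the state to the complementary registers, leaving the relation's operators adjacent to the state. For that, pushability of the plain reflections $X'_{\cdot,k},Z'_{\cdot,k}$ through $L$ (which is all you cite) is insufficient, because the intervening gates are $H$ and $\mathbf{C}(X'_k)$, not plain reflections. The paper supplies exactly the two missing lifts as preliminary observations: (i) Proposition~\ref{prop:antimulti} lifted by superposition over the control to the displayed controlled anticommutation on $\Phi^+\otimes L$; and (ii) Proposition~\ref{prop:switch multi} lifted, via the Bell pair, to show that $\mathbf{C}(X'_{A,k})$ and $\mathbf{C}(Z'_{A,k})$ on $Q_k\otimes A$ can be pushed through $\Phi^+\otimes L$ with error $N\sqrt{\epsilon}$, landing as controlled gates on $\overline{Q}_k\otimes B\otimes C$, while $H$ on $Q_k$ pushes exactly through $\Phi^+$ onto $\overline{Q}_k$. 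These make Lemma~\ref{lemma:push} applicable with $R=Q_k\otimes A$ and $S=\overline{Q}_k\otimes B\otimes C$, which is how the paper legitimately applies the anticommutation beneath the remaining stack; without (i) and (ii), your sketch stalls at the interior controlled gate. (Your error accounting --- ${\cal O}(1)$ uses of an $N\sqrt{\epsilon}$ relation per Pauli, hence ${\cal O}(N\sqrt{\epsilon})$ total --- is consistent with the paper's chains, which use one such step for $X_k$ and one for $Z_k$.)
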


\begin{proposition}
\label{prop:multi pauli}
For any $k \in \{ 1, 2, \ldots, N \}$, 
\begin{equation}
\begin{array}{ccccc}
\vcenter{\hbox{
\begin{tikzpicture}

\node (out1) at (-1.5,3) {$\mathbf{Q}_{1
\ldots N}$};
\node (out2) at (-.5,3) {$A$};
\node (out3) at (.5,3) {$B$};
\node (out4) at (1.5,3) {$C$};

\node[style=cprocess] (X) at (-1.5,2) {$X_{k}$}; 

\node[style=cprocess, minimum width =1.5cm] (psi) at (-1,1) {$\Theta_N$};

\node[style=cstate,minimum width=2.5cm] (L) at (-.5,0) {$L$};

\draw (out1) to[out=-90,in=90] (out1|-X.north);
\draw (X) to (X|-psi.north);
\draw (out2) to (out2|-psi.north);
\draw (out3) to (out3|-L.north);
\draw (out4) to (out4|-L.north);

\draw (psi) to (psi|-L.north);

\end{tikzpicture}
}}
& \mbox{\LARGE $\underset{N^3 \sqrt{\epsilon}}{=\joinrel=}$} &
\vcenter{\hbox{
\begin{tikzpicture}

\node (out1) at (-1.5,3) {$\mathbf{Q}_{1 \ldots N}$};
\node (out2) at (-.5,3) {$A$};
\node (out3) at (.5,3) {$B$};
\node (out4) at (1.5,3) {$C$};

\node[style=cprocess, minimum width =1.5cm] (psi) at (-1,2) {$\Theta_N$};

\node[style=cprocess] (X) at (-1,1) {$X'_{k}$}; 

\node[style=cstate,minimum width=2.5cm] (L) at (-.5,0) {$L$};

\draw (out1) to[out=-90,in=90] (out1|-psi.north);
\draw (out2) to (out2|-psi.north);
\draw (out3) to (out3|-L.north);
\draw (out4) to (out4|-L.north);

\draw (psi) to (X) to (psi|-L.north);

\end{tikzpicture}}}
\end{array}
\end{equation}
and similarly for $Z'_{k}$.  Analogous statements
hold for $\Theta_{B, k}$ and $\Theta_{C, k}$.
\end{proposition}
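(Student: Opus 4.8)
The plan is to reduce the statement about the composite isometry $\Theta_N$ to the single-swap statement already established in Proposition~\ref{prop:correct pauli}, by transporting the Pauli $X_k$ down through the chain $\Theta_N = \Psi_N \circ \cdots \circ \Psi_1$ one swap map at a time. The essential structural observation is that the register $\mathbf{Q}_k$ on which $X_k$ acts is produced by $\Psi_k$ and is never touched again by the later maps $\Psi_{k+1}, \ldots, \Psi_N$, which act only on the running $A$-register. Consequently $X_k$ commutes \emph{exactly} with $\Psi_{k+1}, \ldots, \Psi_N$, so I can slide it downward with no error until it sits directly above $\Psi_k$; that is, $X_k \Theta_N L$ equals $(\Psi_N \circ \cdots \circ \Psi_{k+1}) \circ X_k \Psi_k \circ (\Psi_{k-1} \circ \cdots \circ \Psi_1) L$.

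Next I would convert $X_k$ into $X'_k$ across $\Psi_k$ and then transport $X'_k$ down to the bottom input wire. The conversion across $\Psi_k$ is exactly the content of Proposition~\ref{prop:correct pauli}, except that there $\Psi_k$ is applied to $L$, whereas here it is applied to the intermediate state $\phi := \Psi_{k-1} \circ \cdots \circ \Psi_1 L$. Once $X'_k$ has been produced on the input wire of $\Psi_k$ (equivalently, on the output $A$-wire of $\Psi_{k-1}$), I would slide it downward through $\Psi_{k-1}, \ldots, \Psi_1$. Each such map is assembled from the controlled gates $\mathbf{C}(X'_j), \mathbf{C}(Z'_j)$ together with Hadamards and Bell-state preparations acting on disjoint registers; since $X'_k$ commutes exactly with the latter, the only cost lies in commuting $X'_k$ past $X'_j$ and $Z'_j$ for $j < k$, which is controlled by the approximate commutativity of Proposition~\ref{prop:app commutativity} at a cost of ${\cal O}(N\sqrt{\epsilon})$ per commutation. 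After $k-1$ such slides the operator reaches the bottom $A$-register and we arrive at $\Theta_N X'_k L$. The argument for $Z'_k$, and for the $B$- and $C$-versions $\Theta_{B,k}$ and $\Theta_{C,k}$, is identical by the symmetry of the game.

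The main obstacle is that both the conversion across $\Psi_k$ and the subsequent slides take place on intermediate states $\Psi_m \circ \cdots \circ \Psi_1 L$ rather than on $L$, whereas the (anti-)commutation relations for $X'_k, Z'_k$ were established only on $L$. I would lift these relations to the intermediate states using that $X'_k, Z'_k$ approximately commute with every $X'_j, Z'_j$ for $j \neq k$ (Proposition~\ref{prop:app commutativity}): this lets the operators be carried back through the already-applied swap maps, and the Push Lemma (Lemma~\ref{lemma:push}) is precisely the device that assembles such a sequence of single-step pushes while tracking the cumulative error. Counting the cost, sliding $X'_k$ through the ${\cal O}(N)$ swap maps of $\Theta_N$ requires ${\cal O}(N)$ commutations at ${\cal O}(N\sqrt{\epsilon})$ apiece, and re-establishing each needed relation on the relevant intermediate state contributes a further factor of ${\cal O}(N)$; together these yield the stated ${\cal O}(N^3\sqrt{\epsilon})$. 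Transitivity of the approximation relation $\underset{\delta}{=}$ and the operator-norm bound recorded in Section~\ref{sec approx} guarantee that chaining all of these approximate equalities preserves the bound.
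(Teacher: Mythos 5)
Your proposal is correct and follows essentially the same route as the paper's own proof: you decompose $\Theta_N$ around $\Psi_k$, exploit the exact commutation of $X_k$ with the later swap maps, convert $X_k$ into $X'_k$ across $\Psi_k$ via Proposition~\ref{prop:correct pauli} lifted to the intermediate state by the Push Lemma, and then transport $X'_k$ through the earlier swaps using the approximate commutativity of Proposition~\ref{prop:app commutativity} extended to the controlled gates --- exactly the content of the paper's Lemma~\ref{lemma:commute}, which is itself just repeated applications of Lemma~\ref{lemma:push}. Your error accounting (${\cal O}(N)$ gates, ${\cal O}(N\sqrt{\epsilon})$ per relation, an extra factor of ${\cal O}(N)$ for re-establishing relations on intermediate states) reproduces the paper's $N^2\sqrt{\epsilon}$ and $N^3\sqrt{\epsilon}$ bounds, so there is nothing to add.
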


\subsection{Rigidity}

We are now ready to
state and prove our main result.

\begin{proposition}[Rigidity]
\label{prop:state-approx multi}
Let $\Theta_{A,N}, \Theta_{B, N},
\Theta_{C,N}$ be the isometries from
Definition~\ref{def:swapmap}.  Then, there is some state $L'$ on $A\otimes B\otimes C \otimes \overline{Q}_{1 \ldots 3N}$ such that
\begin{equation}
\begin{array}{ccccc}
\vcenter{\hbox{
\begin{tikzpicture}
\node (out1) at (-2.5,3) {\small $ \mathbf{Q}_{1 \ldots N}$};
\node (out2) at (-0.5,3) {$\mathbf{Q}_{N+1 \ldots 2N}$};
\node (out3) at (1.5,3) {$\mathbf{Q}_{2N+1 \ldots 3N}$};

\node (out4) at (-2,2.5) {$A$};
\node (out5) at (-0,2.5) {$B$};
\node (out6) at (2.5,2.5) {$C$};

\node[style=cprocess] (psi1) at (-1.75,1) {$\Theta_N$};
\node[style=cprocess] (psi2) at (0,1) {$\Theta_N$};
\node[style=cprocess] (psi3) at (1.75,1) {$\Theta_N$};

\node[style=cstate,minimum width = 3.5cm] (L) at (-.75,0) {$L$};

\draw (out1) to[out=-90,in=90] (psi1.120);
\draw (out2) to[out=-90,in=90] (psi2.120);
\draw (out3) to[out=-90,in=90] (psi3.120);

\draw (out4) to[out=-90,in=90] (psi1.60);
\draw (out5) to[out=-90,in=90] (psi2.60);
\draw (out6) to[out=-90,in=90] (psi3.60);

\draw (psi1) to (psi1|-L.north);
\draw (psi2) to (psi2|-L.north);
\draw (psi3) to (psi3|-L.north);

\end{tikzpicture}
}}
& \mbox{\LARGE $\underset{N^4 \sqrt{\epsilon}}{=\joinrel=}$} &
\vcenter{\hbox{
\begin{tikzpicture}

\node (out1) at (-2.75,3.25) {$Q_{1 \ldots N}$};
\node (out2) at (-1.6, 2.75) {$Q_{N+1 \ldots 2N}$};
\node (out3) at (-.5,3.25) {$Q_{2N+1 \ldots 3N}$};

\node (out4) at (.5,2.75) {$A$};
\node (out5) at (1.3,2.75) {$B$};
\node (out6) at (2.1,2.75) {$C$};
\node (out7) at (3.2, 2.75) {$\overline{Q}_{1 \ldots 3N}$};

\node[style=cstate,minimum width = 1.5cm] (GHZ) at (-1.5,0) {$G^{\otimes N}$};
\node[style=cstate,minimum width = 1.5cm] (L) at (1.5,0) {$L'$};

\draw (out1) to[out=-90,in=90] (GHZ.150);
\draw (out2) to[out=-90,in=90] (GHZ.80);
\draw (out3) to[out=-90,in=90] (GHZ.20);

\draw (out4) to[out=-90,in=90] (L.150);
\draw (out5) to[out=-90,in=90] (L.80);
\draw (out6) to[out=-90,in=90] (L.40);
\draw (out7) to[out=-90,in=90] (L.20);
\end{tikzpicture}
}}\\
\end{array}
\end{equation}
\end{proposition}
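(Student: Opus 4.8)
The plan is to recognize that the four inequalities collected in~(\ref{keyineqs}) are exactly the (approximate) stabilizer relations that define $\ket{G}$, and to transport those relations from the reflection operators acting on $L$ to genuine Pauli operators acting on the swapped-out qubit registers, where they pin down $\ket{G}^{\otimes N}$. Write $\tilde{L} := (\Theta_{A,N} \otimes \Theta_{B,N} \otimes \Theta_{C,N}) L$ for the left-hand side of the claimed equation, a vector on $\mathbf{Q}_{1\ldots 3N} \otimes A \otimes B \otimes C$. Here $X_{A,k}, Z_{A,k}$ denote the Pauli operators on $Q_k$, and $X_{B,k}, Z_{B,k}$ and $X_{C,k}, Z_{C,k}$ the Paulis on $Q_{N+k}$ and $Q_{2N+k}$, as in Proposition~\ref{prop:multi pauli}.

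First I would establish, for each round $k$, the four approximate stabilizer relations $X_{A,k} X_{B,k} X_{C,k} \tilde{L} \underset{N^3\sqrt{\epsilon}}{=} -\tilde{L}$, $Z_{A,k} Z_{B,k} X_{C,k} \tilde{L} \underset{N^3\sqrt{\epsilon}}{=} \tilde{L}$, $X_{A,k} Z_{B,k} Z_{C,k} \tilde{L} \underset{N^3\sqrt{\epsilon}}{=} \tilde{L}$, and $Z_{A,k} X_{B,k} Z_{C,k} \tilde{L} \underset{N^3\sqrt{\epsilon}}{=} \tilde{L}$. Each follows by applying Proposition~\ref{prop:multi pauli} three times, once per party, to push each Pauli operator back through its isometry $\Theta_{W,N}$ into the matching reflection $X'_{W,k}$ or $Z'_{W,k}$ acting on $L$; then invoking the corresponding inequality from~(\ref{keyineqs}), which gives e.g. $X'_{A,k} X'_{B,k} X'_{C,k} L \underset{N\sqrt{\epsilon}}{=} -L$; and finally reapplying the isometries. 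Because the three Paulis act on distinct parties' $Q$-registers and the three isometries act on distinct systems, the already-transferred operators commute freely past the isometry currently being handled (and past each $X'_{W,k}$), so the three transfers simply add their $O(N^3\sqrt{\epsilon})$ errors and dominate the $O(N\sqrt{\epsilon})$ base error, leaving total error $O(N^3\sqrt{\epsilon})$ per relation.

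Next I would assemble the stabilizer projector. For each $k$, the three reflections $X_{A,k}X_{B,k}X_{C,k}$, $Z_{A,k}Z_{B,k}X_{C,k}$, $X_{A,k}Z_{B,k}Z_{C,k}$ pairwise commute (each pair shares exactly two anticommuting single-qubit factors) and generate a stabilizer group whose unique simultaneous eigenvector on the triple $(Q_k, Q_{N+k}, Q_{2N+k})$, with signs $-1,+1,+1$, is $\ket{G}$ (the fourth relation is then automatically a consistency check). Set $P_k := \tfrac{I - X_{A,k}X_{B,k}X_{C,k}}{2} \cdot \tfrac{I + Z_{A,k}Z_{B,k}X_{C,k}}{2} \cdot \tfrac{I + X_{A,k}Z_{B,k}Z_{C,k}}{2}$, a genuine orthogonal projector, and let $P := P_1 \cdots P_N$, which projects the bare $Q$-registers onto $\ket{G}^{\otimes N}$ and acts as the identity on $A \otimes B \otimes C \otimes \overline{Q}_{1\ldots 3N}$. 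From each relation $S\tilde{L} \underset{N^3\sqrt{\epsilon}}{=} s\tilde{L}$ with $s = \pm 1$ one gets $\| \tilde{L} - \tfrac{I + sS}{2}\tilde{L}\| = \tfrac12 \|(I - sS)\tilde{L}\| = O(N^3\sqrt{\epsilon})$, so $\tilde{L}$ is approximately fixed by each of the $3N$ elementary projectors. A telescoping estimate, writing $\tilde{L} - P\tilde{L}$ as a sum of terms in which a prefix of projectors (operator norm at most $1$) is applied to a single residual $(\tilde{L} - P_i\tilde{L})$, then yields $\tilde{L} \underset{N^4\sqrt{\epsilon}}{=} P\tilde{L}$, since we add $3N$ errors of size $O(N^3\sqrt{\epsilon})$.

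Finally, since $P = \ket{G}^{\otimes N}\bra{G}^{\otimes N}$ on the $Q$-registers tensored with the identity on the remaining systems, the projected vector factors as $P\tilde{L} = \ket{G}^{\otimes N} \otimes L'$, where $L' := (\bra{G}^{\otimes N} \otimes I)\tilde{L}$ is a (subnormalized) vector on $A \otimes B \otimes C \otimes \overline{Q}_{1\ldots 3N}$; this is precisely the right-hand side of the proposition, completing the argument. I expect the main obstacle to be the bookkeeping of the first step: one must apply Proposition~\ref{prop:multi pauli} to three Pauli operators in sequence while correctly commuting each transferred operator past the isometry and reflections it must cross, and verify that~(\ref{keyineqs}) genuinely furnishes a commuting generating set with the correct signs. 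Everything after that is the routine stabilizer-projector telescoping that assembles the final $O(N^4\sqrt{\epsilon})$ bound.
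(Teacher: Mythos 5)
Your proposal is correct and follows essentially the same route as the paper: your first step---pushing each party's Pauli through its isometry via Proposition~\ref{prop:multi pauli} (three applications, errors adding to ${\cal O}(N^3\sqrt{\epsilon})$) and then invoking the game relations via Proposition~\ref{prop:switch multi}---is precisely the paper's derivation of equations (\ref{prop:rigidity XXX}) and (\ref{prop:rigidity XYY}). Your concluding stabilizer-projector telescoping is merely a repackaging of the paper's finish, which expands $\Theta_N^{\otimes 3}\ket{L}$ in the common eigenbasis of the three commuting round-$i$ operators and bounds the non-$G_0^{\otimes N}$ components by the same count of $3N$ relations at ${\cal O}(N^3\sqrt{\epsilon})$ each, yielding the identical ${\cal O}(N^4\sqrt{\epsilon})$ bound; your generating set $\{-X\!\otimes\!X\!\otimes\!X,\; Z\!\otimes\!Z\!\otimes\!X,\; X\!\otimes\!Z\!\otimes\!Z\}$ is equivalent to the paper's $\{X\!\otimes\!Z\!\otimes\!Z,\; Z\!\otimes\!X\!\otimes\!Z,\; Z\!\otimes\!Z\!\otimes\!X\}$.
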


\begin{proof}
In the following, we write
$\mathbf{Q}^1 := \mathbf{Q}_{1 \ldots N}$,
$\mathbf{Q}^2 := \mathbf{Q}_{(N+1) \ldots 2N}$
and $\mathbf{Q^3} := \mathbf{Q}_{(2N+1)
\ldots 3N}$ in order to conserve space.
By application of Props.~\ref{prop:switch multi} and~\ref{prop:multi pauli} we have,\vspace{0.3cm}\\

\begin{equation}\label{prop:rigidity XXX}
\scalebox{.85}{
$\begin{array}{ccccc}
\vcenter{\hbox{
\begin{tikzpicture}

\node (out1) at (-2.5,3.75) {$\mathbf{Q}^1$};
\node (out2) at (-1.5,3.75) {$\mathbf{Q}^2$};
\node (out3) at (-.5,3.75) {$\mathbf{Q}^3$};

\node (out4) at (.5,3.75) {$A$};

\node (out5) at (1.5,3.75) {$B$};
\node (out6) at (2.5,3.75) {$C$};

\node[style=cprocess] (x1) at (-2.5,2) {$ X_i$};
\node[style=cprocess] (x2) at (-.5,2) {$X_i$};
\node[style=cprocess] (x3) at (1.5,2) {$X_i$};

\node[style=cprocess,minimum width=1.25cm] (psi1) at (-2,1) {$\Theta_N$};
\node (psi1mid) at (-1.5,2) {};
\node[style=cprocess,minimum width=1.25cm] (psi2) at (0,1) {$\Theta_N$};
\node (psi2mid) at (.5,2) {};
\node[style=cprocess,minimum width=1.25cm] (psi3) at (2,1) {$\Theta_N$};

\node[style=cstate,minimum width = 3.5cm] (L) at (-.75,0) {$L$};

\draw (out1) to[out=-90,in=90] (x1);
\draw (out2) to[out=-90,in=90] (x2);
\draw (out3) to[out=-90,in=90] (x3);

\draw (x1) to (x1|-psi1.north);
\draw (x2) to (x2|-psi2.north);
\draw (x3) to (x3|-psi3.north);

\draw (out4) to[out=-90,in=90] (psi1mid) to (psi1mid|-psi1.north);
\draw (out5) to[out=-90,in=90] (psi2mid) to (psi2mid|-psi2.north);
\draw (out6) to[out=-90,in=90] (out6|-psi3.north);

\draw (psi1) to (psi1|-L.north);
\draw (psi2) to (psi2|-L.north);
\draw (psi3) to (psi3|-L.north);

\end{tikzpicture}
}}
& \mbox{\LARGE $\underset{N^3\sqrt{\epsilon}}{=\joinrel=}$} &
\vcenter{\hbox{
\begin{tikzpicture}

\node (out1) at (-2.5,3.75) {$\mathbf{Q}^1$};
\node (out2) at (-1.5,3.75) {$\mathbf{Q}^2$};
\node (out3) at (-.5,3.75) {$\mathbf{Q}^3$};

\node (out4) at (.5,3.75) {$A$};
\node (out5) at (1.5,3.75) {$B$};
\node (out6) at (2.5,3.75) {$C$};

\node[style=cprocess,minimum width=1.25cm] (psi1) at (-2,2) {$\Theta_N$};
\node[style=cprocess,minimum width=1.25cm] (psi2) at (0,2) {$\Theta_N$};
\node[style=cprocess,minimum width=1.25cm] (psi3) at (2,2) {$\Theta_N$};

\node[style=cprocess] (x1) at (-2,1) {$X_{i}'$};
\node[style=cprocess] (x2) at (0,1) {$X_{i}'$};
\node[style=cprocess] (x3) at (2,1) {$X_{i}'$};

\node[style=cstate,minimum width = 3.5cm] (L) at (-.75,0) {$L$};

\draw (out1) to[out=-90,in=90] (psi1.135);
\draw (out2) to[out=-90,in=90] (psi2.135);
\draw (out3) to[out=-90,in=90] (psi3.135);

\draw (psi1) to (x1) to (x1|-L.north);
\draw (psi2) to (x2) to (x2|-L.north);
\draw (psi3) to (x3) to (x3|-L.north);

\draw (out4) to[out=-90,in=90] (psi1.45);
\draw (out5) to[out=-90,in=90] (psi2.45);
\draw (out6) to[out=-90,in=90] (psi3.45);

\end{tikzpicture}
}}
\\\\
& \mbox{\LARGE $\underset{N\sqrt{\epsilon}}{=\joinrel=}$} &
\vcenter{\hbox{
\begin{tikzpicture}

\node (out1) at (-2.5,2.75) {$\mathbf{Q}^1$};
\node (out2) at (-1.5,2.75) {$\mathbf{Q}^2$};
\node (out3) at (-.5,2.75) {$\mathbf{Q}^3$};

\node (out4) at (.5,2.75) {$A$};
\node (out5) at (1.5,2.75) {$B$};
\node (out6) at (2.5,2.75) {$C$};

\node[style=cprocess,minimum width=1.25cm] (psi1) at (-2,1) {$\Theta_N$};
\node[style=cprocess,minimum width=1.25cm] (psi2) at (0,1) {$\Theta_N$};
\node[style=cprocess,minimum width=1.25cm] (psi3) at (2,1) {$\Theta_N$};

\node[style=cstate,minimum width = 3.5cm] (L) at (-.75,0) {$-L$};

\draw (out1) to[out=-90,in=90] (psi1.135);
\draw (out2) to[out=-90,in=90] (psi2.135);
\draw (out3) to[out=-90,in=90] (psi3.135);

\draw (psi1) to (psi1|-L.north);
\draw (psi2) to (psi2|-L.north);
\draw (psi3) to (psi3|-L.north);

\draw (out4) to[out=-90,in=90] (psi1.45);
\draw (out5) to[out=-90,in=90] (psi2.45);
\draw (out6) to[out=-90,in=90] (psi3.45);

\end{tikzpicture}}}
\end{array}$}\end{equation}

(Here, $X_i$ is used to denote the $X$-Pauli
operator on either $Q_i$, $Q_{N+i}$,
or $Q_{2N+i}$ depending on which 
wire it is applied to.)
Similarly we obtain that\vspace{-0.1cm}\\

\begin{equation}\label{prop:rigidity XYY}
\scalebox{.85}{
$\begin{array}{ccccc}
\vcenter{\hbox{
\begin{tikzpicture}

\node (out1) at (-2.5,3.75) {$\mathbf{Q}^{1}$};
\node (out2) at (-1.5,3.75) {$\mathbf{Q}^{2}$};
\node (out3) at (-.5,3.75) {$\mathbf{Q}^{3}$};

\node (out4) at (.5,3.75) {$A$};

\node (out5) at (1.5,3.75) {$B$};
\node (out6) at (2.5,3.75) {$C$};

\node[style=cprocess] (x1) at (-2.5,2) {$X_i$};
\node[style=cprocess] (x2) at (-.5,2) {$Z_i$};
\node[style=cprocess] (x3) at (1.5,2) {$Z_i$};

\node[style=cprocess,minimum width=1.25cm] (psi1) at (-2,1) {$\Theta_N$};
\node (psi1mid) at (-1.5,2) {};
\node[style=cprocess,minimum width=1.25cm] (psi2) at (0,1) {$\Theta_N$};
\node (psi2mid) at (.5,2) {};
\node[style=cprocess,minimum width=1.25cm] (psi3) at (2,1) {$\Theta_N$};

\node[style=cstate,minimum width = 3.5cm] (L) at (-.75,0) {$L$};

\draw (out1) to[out=-90,in=90] (x1);
\draw (out2) to[out=-90,in=90] (x2);
\draw (out3) to[out=-90,in=90] (x3);

\draw (x1) to (x1|-psi1.north);
\draw (x2) to (x2|-psi2.north);
\draw (x3) to (x3|-psi3.north);

\draw (out4) to[out=-90,in=90] (psi1mid) to (psi1mid|-psi1.north);
\draw (out5) to[out=-90,in=90] (psi2mid) to (psi2mid|-psi2.north);
\draw (out6) to[out=-90,in=90] (out6|-psi3.north);

\draw (psi1) to (psi1|-L.north);
\draw (psi2) to (psi2|-L.north);
\draw (psi3) to (psi3|-L.north);

\end{tikzpicture}
}}
& \mbox{\LARGE $\underset{N^3\sqrt{\epsilon}}{=\joinrel=}$} &
\vcenter{\hbox{
\begin{tikzpicture}

\node (out1) at (-2.5,2.75) {$\mathbf{Q}^{1}$};
\node (out2) at (-1.5,2.75) {$\mathbf{Q}^{2}$};
\node (out3) at (-.5,2.75) {$\mathbf{Q}^{3}$};

\node (out4) at (.5,2.75) {$A$};
\node (out5) at (1.5,2.75) {$B$};
\node (out6) at (2.5,2.75) {$C$};

\node[style=cprocess,minimum width=1.25cm] (psi1) at (-2,1) {$\Theta_N$};
\node[style=cprocess,minimum width=1.25cm] (psi2) at (0,1) {$\Theta_N$};
\node[style=cprocess,minimum width=1.25cm] (psi3) at (2,1) {$\Theta_N$};

\node[style=cstate,minimum width = 3.5cm] (L) at (-.75,0) {$L$};

\draw (out1) to[out=-90,in=90] (psi1.135);
\draw (out2) to[out=-90,in=90] (psi2.135);
\draw (out3) to[out=-90,in=90] (psi3.135);

\draw (psi1) to (psi1|-L.north);
\draw (psi2) to (psi2|-L.north);
\draw (psi3) to (psi3|-L.north);

\draw (out4) to[out=-90,in=90] (psi1.45);
\draw (out5) to[out=-90,in=90] (psi2.45);
\draw (out6) to[out=-90,in=90] (psi3.45);

\end{tikzpicture}}}
\end{array}$
}
\end{equation}
where the last relation also holds for
any permutation of the labels $(X_i, Z_i, Z_i)$ on the left side of the equation.

The commuting reflection operators
$X \otimes Z \otimes Z$,
$Z \otimes X \otimes Z$,
and $Z \otimes Z \otimes X$
on $\mathbb{C}^2 \otimes \mathbb{C}^2
\otimes \mathbb{C}^2$
have a common orthonormal eigenbasis
$G = G_0, G_1, \ldots, G_7$,
in which $G_0$ is the only 
eigenvector that has
eigenvalue $(+1)$ for all three
operators.
We can express $\Theta_A^N\otimes\Theta_B^N\otimes\Theta_c^N\ket{L}$ using this basis as 
\begin{align}
\label{finaldecomp}
\Theta_A^N\otimes\Theta_B^N\otimes\Theta_c^N\ket{L}=\sum_{v_1,\cdots, v_N\in\{G_0, \ldots, G_7 \}}\ket{v_1}\otimes\cdots\otimes\ket{v_N}\otimes\ket{L'_{\mathbf v}}.
\end{align}
where $L'_{\mathbf{v}} \in 
A \otimes B \otimes C \otimes \overline{Q}_{1 \ldots 3N}$.
For every term in the sum on the
right
except the one indexed by
$G_0^{\otimes N}$, there
is an operator
of the form
$X_{A, i} \otimes Z_{B, i} \otimes Z_{C,i}$,
$Z_{A,i} \otimes X_{B,i} \otimes Z_{C,i}$,
or $Z_{A,i} \otimes Z_{B,i} \otimes X_{C,i}$ which negates it.  By equation
(\ref{prop:rigidity XYY}) above,
the total length of all the terms negated by any one particular
gate of this form is
${\cal O} ( N^3 \sqrt{\epsilon })$,
and so the total length of all
terms in  (\ref{finaldecomp})
other than the $G_0^{\otimes N}$
term is ${\cal O} ( N^4 \sqrt{\epsilon})$, as desired.
\end{proof}

We note that our proofs generalize in a straightforward manner to a proof of self-testing for an arbitrary number of copies of a $k$-GHZ state, for any integer $k > 3$. 
The graphical method seems generally
well-suited to proving parallel self-testing for stabilizer
states, including graph states
\cite{mckague2016interactive}.  We leave possible generalizations to future work.

\section{Acknowledgements}

The authors thank Aaron Ostrander and Neil J. Ross for discussions about this project.  This work includes contributions from the National Institute of Standards and Technology, and is not subject to copyright in the United States.

\nocite{*}
\bibliographystyle{eptcs}
\bibliography{rigidity}

\newpage

\begin{appendices}

\section{Categorical Quantum Mechanics}
\label{appendix cqm}
In this appendix we will briefly review some of the standard machinery of categorical quantum mechanics. For a thorough introduction to the topic, see \cite{coecke2017picturing}.

\subsection{Symmetric Monoidal Categories}

The formal context for categorical quantum mechanics is that of \emph{symmetric monoidal categories}. We develop the terminology in stages. 

A \emph{category} is a mathematical structure representing a universe of (possible) processes $F,G,H,\ldots$; each process has a typed input and output, often indicated by writing $F:A\to B$. The fundamental structure in a category is serial composition; whenever the output type of $F:A\to B$ matches the input type of $G:B\to C$, we may form a composite process $G \circ F:A\to C$.

A \emph{monoidal} category generalizes the structure of an ordinary category to allow for multi-partite processes; here the fundamental object of study is a process ${F:A_1\otimes\ldots\otimes A_m \to B_1\otimes\ldots\otimes B_n}$, which is represented as a black box with $m$ labeled inputs and $n$ labeled outputs.
(Either of the numbers $m,n$ may be zero.)  Diagrammatically, we represent these classes as follows:

\begin{center}
\begin{equation}
\begin{tabular}{cccc}
\begin{tikzpicture}
\node[style=cprocess, minimum width=2.5cm] (F) at (0,0) {$F$};

\node (in1) at (-1,-1) {$A_1$};
\draw (in1) to (in1|-F.south);

\node (in2) at (-.33,-1) {$A_2$};
\draw (in2) to (in2|-F.south);

\node (indots) at (.33,-1) {$\ldots$};

\node (in3) at (1,-1) {$A_m$};
\draw (in3) to (in3|-F.south);

\node (out1) at (-1,1) {$B_1$};
\draw (out1) to (out1|-F.north);

\node (out2) at (-.33,1) {$B_2$};
\draw (out2) to (out2|-F.north);

\node (outdots) at (.33,1) {$\ldots$};

\node (out3) at (1,1) {$B_n$};
\draw (out3) to (out3|-F.north);

\end{tikzpicture}
&
\begin{tikzpicture}
\node[style=cstate, minimum width=2cm] (F) at (-.33,0) {$S$};

\node[minimum height={3.5ex}] (in1) at (-1,-1) {};

\node (out1) at (-1,1) {$A_1$};
\draw (out1) to (out1|-F.north);

\node (out2) at (-.33,1) {$A_2$};
\draw (out2) to (out2|-F.north);

\node (outdots) at (.33,1) {$\ldots$};

\node (out3) at (1,1) {$A_m$};
\draw (out3) to (out3|-F.north);
\end{tikzpicture}
&
\begin{tikzpicture}
\node[style=ceffect, minimum width=2cm] (F) at (-.33,0) {$M$};

\node (in1) at (-1,-1) {$B_1$};
\draw (in1) to (in1|-F.south);

\node (in2) at (-.33,-1) {$B_2$};
\draw (in2) to (in2|-F.south);

\node (indots) at (.33,-1) {$\ldots$};

\node (in3) at (1,-1) {$B_n$};
\draw (in3) to (in3|-F.south);

\end{tikzpicture}
&
\begin{tikzpicture}
\node[draw,thick,style=diamond] (F) at (-.33,0) {$K$};

\node[minimum height={3.5ex}] (in1) at (-1,-1) {};
\end{tikzpicture}\\
\end{tabular}
\label{examplediags}
\end{equation}
\end{center}

The categorical structure of serial composition is represented diagrammatically by matching the output wires of one process to the inputs of another, so long as the types match up. So, above, $F$ can be pre-composed with $S$ and post-composed with $M$ to yield a scalar $M \circ F \circ S$ or, in Dirac notation, $\langle M|F|S\rangle$.

Along with multi-partite states and processes, monoidal structure also introduces an operation of parallel composition on processes: given $F:A\to B$ and $G:A'\to B'$, we can produce a parallel process ${F\otimes G: A\otimes A'\to B\otimes B'}$, depicted graphically by side-by-side juxtaposition. More generally, using parallel composition with identity processes (represented by bare wires), we can compose processes in which only some inputs and outputs match. Note that we will often suppress wire labels in complicated diagrams, as the labels are implicitly determined by the boxes they feed.

Finally, a \emph{symmetric} structure on a monoidal category allow for additional flexibility in how wires can be manipulated. A symmetry allows us to permute the ordering of strings, and is represented diagrammatically by crossing wires (called a \emph{twist}). Formally, the twist is axiomatized terms of intuitive diagrammatic equations:

\begin{equation}
\vcenter{\hbox{
\begin{tikzpicture}
\node (in1) at (-.375,-1.5) {$A$};
\node (in2) at (.375,-1.5) {$B$};

\node (out1) at (-.375,1.5) {$A$};
\node (out2) at (.375,1.5) {$B$};

\node (mid1) at (-.375,0) {};
\node (mid2) at (.375,0) {};

\draw (in1) to[in=-90,out=90] (mid2.center);
\draw (mid2.center) to[in=-90,out=90] (out1);

\draw (in2) to[in=-90,out=90] (mid1.center);
\draw (mid1.center) to[in=-90,out=90] (out2);

\node (eq) at (1,0) {=};

\node (in3) at (1.5,-1.5) {$A$};
\node (in4) at (2.25,-1.5) {$B$};

\node (out3) at (1.5,1.5) {$A$};
\node (out4) at (2.25,1.5) {$B$};

\draw (in3) to (out3);
\draw (in4) to (out4);
\end{tikzpicture}
}}
\hspace{2cm}
\vcenter{\hbox{
\begin{tikzpicture}
\node (in1) at (-1,-1.5) {$A$};
\node (in2) at (.25,-1.5) {$B$};

\node (out1) at (-1,1.5) {$A'$};
\node (out2) at (.25,1.5) {$B'$};

\node[style=cprocess] (mid1) at (-1,.75) {$F$};
\node[style=cprocess] (mid2) at (.25,.75) {$G$};

\draw (in1) to[in=-90,out=90] (mid2);
\draw (mid1) to[in=-90,out=90] (out1);

\draw (in2) to[in=-90,out=90] (mid1);
\draw (mid2) to[in=-90,out=90] (out2);

\node (eq) at (1,0) {=};

\node (in3) at (1.75,-1.5) {$A$};
\node (in4) at (3,-1.5) {$B$};

\node (out3) at (1.75,1.5) {$A'$};
\node (out4) at (3,1.5) {$B'$};

\node[style=cprocess] (mid3) at (1.75,-.75) {$G$};
\node[style=cprocess] (mid4) at (3,-.75) {$F$};

\draw (in4) to[in=-90,out=90] (mid4);
\draw (mid4) to[in=-90,out=90] (out3);

\draw (in3) to[in=-90,out=90] (mid3);
\draw (mid3) to[in=-90,out=90] (out4);
\end{tikzpicture}
}}
\label{thetwist}
\end{equation}

\subsection{Quantum states and processes}

To apply the above approach to quantum mechanics, we work within the category of finite-dimensional Hilbert spaces over $\mathbb{C}$.
In this category, which we denote by \textbf{Hilb}, the types are finite-dimensional Hilbert spaces (i.e., vector spaces over $\mathbb{C}$ with semi-linear inner product)
each
equipped with a fixed orthonormal basis, and the processes
are $\mathbb{C}$-linear maps between such vector spaces.  For example, if $A_1, \ldots, A_m, B_1 , \ldots, B_n$ are finite-dimensional
Hilbert spaces, then the diagrams in display box (\ref{examplediags}) above represent, respectively, a linear map $F \colon
A_1 \otimes \ldots \otimes A_m \to B_1 \otimes \ldots \otimes B_n$, a vector $S \in A_1 \otimes \ldots \otimes A_m$,
a linear map $M \colon B_1 \otimes \ldots \otimes B_n \to \mathbb{C}$, and a scalar $K \in \mathbb{C}$.  
Serial composition of diagrams simply represents composition of functions --- for example, the composition of $S, F$ and $M$ is
simply the scalar $M ( F ( S ) ) \in \mathbb{C}$.
It is elementary to show this category is a symmetric monoidal category (with the tensor product as its monoidal operation).

For our purposes, a \textit{state} is a vector in a Hilbert space (i.e., a process with no inputs) whose norm is equal to one. 
A \textit{unitary process} $F \colon A \to B$ is a linear map that satisfies $F F^* = I_B$ and $F^* F = I_A$.  (Also,
a linear map $G \colon A \to B$ that satisfies the single condition $G^*  G = I_A$ is called an \textit{isometry}.)
With these definitions, we will be able to express quantum states and processes as diagrams like the ones in 
(\ref{examplediags}) and (\ref{thetwist}) above.



We note that while symmetric monoidal categories are sufficient to handle the book-keeping needed in our proofs, it is 
only a fragment of the full CQM theory. Further development introduces compact closed structures (trace, transpose, state-process duality), dagger structures (adjoint, conjugate), Frobenius structures (orthonormal basis, classical-quantum interaction) and Hopf structures (complementary bases, ZX-calculus).  For our purposes, we
need only one additional visual definition, which is the \textit{Bell state}.  In this paper we use a gray node with two wires of the same type $R$,
\begin{eqnarray}
\begin{tikzpicture}
\node[style=bell] (bell) at (0,0) {};
\draw (bell) to node [left] {$R$} (-0.5,1);
\draw (bell) to node [right] {$R$} (0.5,1);
\end{tikzpicture}
\end{eqnarray}
to denote the unit vector
\begin{eqnarray}
\left( \frac{1}{\sqrt{\textnormal{dim } R}} \right) \sum_e e \otimes e & \in & R \otimes R,
\end{eqnarray}
where the sum is taken over the standard
basis of $R$.  If $R \cong \mathbb{C}^2$, we
denote this state symbolically by $\Phi^+$.

\section{Controlled Unitaries}
\label{appendix cu}

Our proof makes substantial uses of controlled unitary operations. This appendix collects key facts about controlled operations which will simplify our main proof. 

\begin{definition}[Controlled Unitary]
Let $Q \cong \mathbb{C}^2$ denote a qubit register with a fixed computational basis $\{|0\rangle,|1\rangle\}$, and suppose $U \colon H \to H$ is a unitary operation. The associated \emph{controlled unitary} $\mathbf{C} (U) : Q \otimes H \to Q \otimes H$ is defined by
\begin{eqnarray}
\mathbf{C} ( U ) & = & \left| 0 \right> \left< 0 \right| \otimes I_H + \left| 1 \right> \left< 1 \right| \otimes U.
\end{eqnarray}
\end{definition}

The next lemma describes some (anti-)commutativity properties between controlled unitaries and Pauli operators $X$ and $Z$. The proofs follow directly from the definition.

\begin{lemma}
For any reflection $R:H\to H$ we have the following equations
\begin{equation}
\begin{array}{ccccc}
\vcenter{\hbox{
\begin{tikzpicture}

\node (out1) at (-.5,1.75) {$Q$};
\node (out2) at (.5,1.75) {$H$};

\node (in1) at (-.5,-1) {$Q$};
\node (in2) at (.5,-1) {$H$};

\node[style=cprocess] (x) at (-.5,1) {$Z$};
\node[style=cprocess,minimum width=1.75cm] (CU) at (0,0) {$\mathbf{C}(R)$};

\draw (in1) to (in1|-CU.south);
\draw (out1) to (x) to (x|-CU.north);

\draw (in2) to (in2|-CU.south);
\draw (out2) to (out2|-CU.north);

\end{tikzpicture}
}}
& = &
\vcenter{\hbox{
\begin{tikzpicture}

\node (out1) at (-.5,1.75) {$Q$};
\node (out2) at (.5,1.75) {$H$};

\node (in1) at (-.5,-1) {$Q$};
\node (in2) at (.5,-1) {$H$};

\node[style=cprocess] (x) at (-.5,0) {$Z$};
\node[style=cprocess,minimum width=1.75cm] (CU) at (0,1) {$\mathbf{C}(R)$};

\draw (in1) to (x) to (in1|-CU.south);
\draw (out1) to (x|-CU.north);

\draw (in2) to (in2|-CU.south);
\draw (out2) to (out2|-CU.north);

\end{tikzpicture}
}}
&=&
\vcenter{\hbox{
\begin{tikzpicture}

\node (out1) at (-.5,1.5) {$Q$};
\node (out2) at (.5,1.5) {$H$};

\node (in1) at (-.5,-.5) {$Q$};
\node (in2) at (.5,-.5) {$H$};

\node[style=cprocess,minimum width=1.75cm] (CU) at (0,.5) {$\mathbf{C}(-R)$};

\draw (in1) to (in1|-CU.south);
\draw (out1) to (out1|-CU.north);

\draw (in2) to (in2|-CU.south);
\draw (out2) to (out2|-CU.north);

\end{tikzpicture}
}}\\
\end{array}
\end{equation}

\begin{equation}
\begin{array}{ccccc}
\vcenter{\hbox{
\begin{tikzpicture}

\node (out1) at (-.5,1.75) {$Q$};
\node (out2) at (.5,1.75) {$H$};

\node (in1) at (-.5,-2) {$Q$};
\node (in2) at (.5,-2) {$H$};

\node[style=cprocess] (y1) at (-.5,1) {$X$};
\node[style=cprocess] (y2) at (-.5,-1) {$X$};
\node[style=cprocess,minimum width=1.75cm] (CU) at (0,0) {$\mathbf{C}(R)$};

\draw (in1) to (y2) to (y2|-CU.south);
\draw (out1) to (y1) to (y1|-CU.north);

\draw (in2) to (in2|-CU.south);
\draw (out2) to (out2|-CU.north);

\end{tikzpicture}
}}
& = &
\vcenter{\hbox{
\begin{tikzpicture}

\node (out1) at (-.5,1.25) {$Q$};
\node (out2) at (.5,1.25) {$H$};

\node (in1) at (-.5,-1.5) {$Q$};
\node (in2) at (.5,-1.5) {$H$};

\node[style=cprocess] (U) at (.5,.5) {$R$};
\node[style=cprocess,minimum width=1.75cm] (CU) at (0,-.5) {$\mathbf{C}(R)$};

\draw (in1) to (in1|-CU.south);
\draw (out1) to (out1|-CU.north);

\draw (in2) to (in2|-CU.south);
\draw (out2) to (U) to (U|-CU.north);

\end{tikzpicture}
}}
&=&
\vcenter{\hbox{
\begin{tikzpicture}

\node (out1) at (-.5,1.5) {$Q$};
\node (out2) at (.5,1.5) {$H$};

\node (in1) at (-.5,-1.25) {$Q$};
\node (in2) at (.5,-1.25) {$H$};

\node[style=cprocess] (U) at (.5,-.5) {$R$};
\node[style=cprocess,minimum width=1.75cm] (CU) at (0,.5) {$\mathbf{C}(R)$};

\draw (in1) to (in1|-CU.south);
\draw (out1) to (out1|-CU.north);

\draw (in2) to (U) to (U|-CU.south);
\draw (out2) to (out2|-CU.north);

\end{tikzpicture}
}}\\
\end{array}
\end{equation}
\label{lemma cuxy}
\end{lemma}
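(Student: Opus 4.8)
The plan is to verify both displayed chains of equalities by direct computation, expanding the controlled unitary through its defining formula $\mathbf{C}(R) = \ket0\bra0 \otimes I_H + \ket1\bra1 \otimes R$ and tracking how the Pauli operators $X$ and $Z$ act on the computational basis $\{\ket0,\ket1\}$ of the control register $Q$. Every diagram in the statement is a finite composite of explicit linear maps, so each one translates immediately into a short operator expression, and the lemma reduces to comparing these expressions.

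For the first equation I would push the control-wire $Z$ through $\mathbf{C}(R)$ from each side. Using $Z\ket0 = \ket0$ and $Z\ket1 = -\ket1$ (equivalently $\bra0 Z = \bra0$, $\bra1 Z = -\bra1$), one finds
\[ (Z \otimes I_H)\, \mathbf{C}(R) = \ket0\bra0 \otimes I_H - \ket1\bra1 \otimes R = \mathbf{C}(-R), \]
and the identical calculation with $Z$ applied on the right gives $\mathbf{C}(R)\,(Z \otimes I_H) = \mathbf{C}(-R)$ as well, so the three diagrams agree. This step uses only that $Z$ is diagonal in the control basis and negates $\ket1$; it does not require the reflection hypothesis.

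For the second equation I would conjugate the control wire by $X$. Since $X\ket0\bra0 X = \ket1\bra1$ and $X\ket1\bra1 X = \ket0\bra0$, this swaps the two control branches, yielding
\[ (X \otimes I_H)\, \mathbf{C}(R)\, (X \otimes I_H) = \ket1\bra1 \otimes I_H + \ket0\bra0 \otimes R. \]
The crucial point --- and the only place the reflection property $R^2 = I_H$ is used --- is that applying $R$ to the target rewrites the controlled operation into exactly this anti-controlled form:
\[ (I_Q \otimes R)\, \mathbf{C}(R) = \ket0\bra0 \otimes R + \ket1\bra1 \otimes R^2 = \ket0\bra0 \otimes R + \ket1\bra1 \otimes I_H, \]
and $\mathbf{C}(R)\,(I_Q \otimes R)$ produces the same operator. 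Comparing the three expressions establishes the second equation.

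There is no genuine obstacle here: once one commits to the computational-basis expansion of $\mathbf{C}(R)$, the whole lemma is routine bookkeeping. The single point that requires attention is invoking $R^2 = I_H$ in the second equation, since that identity fails for a general unitary --- which is precisely why the hypothesis restricts $R$ to be a reflection.
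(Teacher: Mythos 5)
Your proposal is correct and takes the same route as the paper, which simply asserts that ``the proofs follow directly from the definition'' $\mathbf{C}(R) = \ket{0}\bra{0}\otimes I_H + \ket{1}\bra{1}\otimes R$; your computational-basis expansion is precisely that verification, with all three operators in the first chain equal to $\mathbf{C}(-R)$ and all three in the second equal to $\ket{0}\bra{0}\otimes R + \ket{1}\bra{1}\otimes I_H$. Your closing observation is also accurate: the first equation holds for an arbitrary unitary $R$, and the reflection hypothesis $R^2 = I_H$ is invoked only in the second.
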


\section{Supporting Proofs}
\label{appendix comm proofs}

In this appendix we provide the proofs for propositions from the main text.

\subsection{Proof of Proposition~\ref{prop:app
commutativity}}

We give the proof for $b=0,c=1$; the other cases are analogous. Using inequalities (\ref{keyineqs}) and their variants, we have
\\
\begin{equation}
\begin{array}{ccccc}
\vcenter{\hbox{
\begin{tikzpicture}
\node (out1) at (-1.25,3) {$A$};
\node (out2) at (0,3) {$B$};
\node (out3) at (1.25,3) {$C$};

\node[style=cprocess] (x1) at (-1.25,2) {$Z_{j}'$};

\node[style=cprocess] (y1) at (-1.25,1) {$X_{i}'$};

\node[style=cstate, minimum width=2cm] (L) at (-.5,0) {$L$};

\draw (out1) to (x1) to (y1) to (y1|-L.north);

\draw (out2) to (out2|-L.north);

\draw (out3) to (out3|-L.north);
\end{tikzpicture}
}}
& \mbox{\LARGE $\underset{N\sqrt{\epsilon}}{=\joinrel=}$} &
\vcenter{\hbox{
\begin{tikzpicture}
\node (out1) at (-1.5,2) {$A$};
\node (out2) at (0,2) {$B$};
\node (out3) at (1.5,2) {$C$};


\node[style=cprocess] (y1) at (-1.5,1) {$Z_{j}'$};
\node[style=cprocess] (y2) at (0,1) {$X_{i}'$};
\node[style=cprocess] (y3) at (1.5,1) {$X_{i}'$};

\node[style=cstate, minimum width=2.5cm] (L) at (-.5,0) {$-L$};

\draw (out1) to (y1) to (y1|-L.north);

\draw (out2) to (y2) to (y2|-L.north);

\draw (out3) to (y3) to (y3|-L.north);
\end{tikzpicture}
}}
& \mbox{\LARGE $\underset{N\sqrt{\epsilon}}{=\joinrel=}$} &
\vcenter{\hbox{
\begin{tikzpicture}
\node (out1) at (-1.5,3) {$A$};
\node (out2) at (0,3) {$B$};
\node (out3) at (1.5,3) {$C$};

\node[style=cprocess] (x2) at (0,2) {$X_{i}'$};
\node[style=cprocess] (x3) at (1.5,2) {$X_{i}'$};

\node[style=cprocess] (y2) at (0,1) {$X_{j}'$};
\node[style=cprocess] (y3) at (1.5,1) {$Z_{j}'$};

\node[style=cstate, minimum width=2.5cm] (L) at (-.5,0) {$-L$};

\draw (out1) to (out1|-L.north);

\draw (out2) to (x2) to (y2) to (y2|-L.north);

\draw (out3) to (x3) to (y3) to (y3|-L.north);
\end{tikzpicture}
}}
\end{array}
\end{equation}

\begin{equation*}
\begin{array}{cccccc}
& \mbox{\LARGE $\underset{N\sqrt{\epsilon}}{=\joinrel=}$} &
\vcenter{\hbox{
\begin{tikzpicture}
\node (out1) at (-1.75,2) {$A$};
\node (out2) at (0,2) {$B$};
\node (out3) at (1.5,2) {$C$};


\node[style=cprocess] (y1) at (-1.75,1) {$Z'_{j\mid i \to 0}$};
\node[style=cprocess] (y2) at (0,1) {$X_{i}'$};
\node[style=cprocess] (y3) at (1.5,1) {$X_{i}'$};

\node[style=cstate, minimum width=2.75cm] (L) at (-.75,0) {$-L$};

\draw (out1) to (y1) to (y1|-L.north);

\draw (out2) to (y2) to (y2|-L.north);

\draw (out3) to (y3) to (y3|-L.north);
\end{tikzpicture}
}}
& \mbox{\LARGE $\underset{N\sqrt{\epsilon}}{=\joinrel=}$} &
\vcenter{\hbox{
\begin{tikzpicture}
\node (out1) at (-1.25,3) {$A$};
\node (out2) at (0,3) {$B$};
\node (out3) at (1.25,3) {$C$};

\node[style=cprocess] (x1) at (-1.25,2) {$Z'_{j\mid i \to 0}$};

\node[style=cprocess] (y1) at (-1.25,1) {$X'_{i\mid j \to 1}$};

\node[style=cstate, minimum width=2cm] (L) at (-.5,0) {$L$};

\draw (out1) to (x1) to (y1) to (y1|-L.north);

\draw (out2) to (out2|-L.north);

\draw (out3) to (out3|-L.north);
\end{tikzpicture}
}}
\\
\end{array}
\end{equation*}

A similar sequence shows that 

\begin{equation}
\begin{array}{ccc}
\vcenter{\hbox{
\begin{tikzpicture}
\node (out1) at (-1.25,3) {$A$};
\node (out2) at (0,3) {$B$};
\node (out3) at (1.25,3) {$C$};

\node[style=cprocess] (x1) at (-1.25,2) {$X_{i}'$};

\node[style=cprocess] (y1) at (-1.25,1) {$Z_{j}'$};

\node[style=cstate, minimum width=2cm] (L) at (-.5,0) {$L$};

\draw (out1) to (x1) to (y1) to (y1|-L.north);

\draw (out2) to (out2|-L.north);

\draw (out3) to (out3|-L.north);
\end{tikzpicture}
}}
& \mbox{\LARGE $\underset{N\sqrt{\epsilon}}{=\joinrel=}$} &
\vcenter{\hbox{
\begin{tikzpicture}
\node (out1) at (-1.25,3) {$A$};
\node (out2) at (0,3) {$B$};
\node (out3) at (1.25,3) {$C$};

\node[style=cprocess] (x1) at (-1.25,2) {$X'_{i\mid j \to 1 }$};

\node[style=cprocess] (y1) at (-1.25,1) {$Z'_{j\mid i \to 0}$};

\node[style=cstate, minimum width=2cm] (L) at (-.5,0) {$L$};

\draw (out1) to (x1) to (y1) to (y1|-L.north);

\draw (out2) to (out2|-L.north);

\draw (out3) to (out3|-L.north);
\end{tikzpicture}
}}
\\
\end{array}
\end{equation}

Since $X'_{A, i|j\to 1} =R^A_{i \to 0 \mid j \to 1 }$  and $Z'_{A, j|i\to 0}:=R^A_{j \to 1 \mid i \to 0}$ are assumed to commute, the result follows.

\subsection{Proof of Lemma~\ref{lemma:push}}

Applying the push condition inductively,
we have the following
for some unitary operators $V_1, \ldots, V_k$:
\begin{equation}
\begin{array}{ccccccc}
\vcenter{\hbox{
\begin{tikzpicture}
\node[style=cstate] (z) at
(0,0) {$~~~~Z~~~~~$};
\node[style=cprocess] (uk)
at (-0.3,1) {$U_k$};
\draw (uk) to (uk|-z.north);
\node[style=cprocess] (u1)
at (-0.3,3) {$U_1$};
\draw (uk) to (uk|-z.north);
\node (dots) at (-0.3,2) {$\vdots$};
\draw (u1) to (u1|-dots.north);
\draw (dots) to (dots|-uk.north);
\node[style=cprocess] (v) at (-0.3,4) {$V$};
\draw (v) to (v|-u1.north);
\node (r) at (-0.3,5) {$R$};
\node (s) at (0.5,5) {$S$};
\draw (s) to (s|-z.north);
\draw (r) to (r|-v.north);
\end{tikzpicture}}}
& \vcenter{\hbox{{\LARGE $\underset{k\epsilon}{=\joinrel=}$}}} & 
\vcenter{\hbox{
\begin{tikzpicture}
\node[style=cstate] (z) at
(0,0) {$~~~~Z~~~~~$};
\node[style=cprocess] (uk)
at (0.5,3) {$V_k$};
\draw (uk) to (uk|-dots.north);
\node[style=cprocess] (u1)
at (0.5,1) {$V_1$};
\node (dots) at (0.5,2) {$\vdots$};
\draw (dots) to (dots|-u1.north);
\node[style=cprocess] (v) at (-0.3,2) {$V$};
\draw (v) to (v|-u1.north);
\node (r) at (-0.3,5) {$R$};
\node (s) at (0.5,5) {$S$};
\draw (s) to (s|-uk.north);
\draw (r) to (r|-v.north);
\draw (v) to (v|-z.north);
\draw (u1) to (u1|-z.north);
\end{tikzpicture}}}
& \vcenter{\hbox{{\LARGE $\underset{\delta}{=\joinrel=}$}}} & 
\vcenter{\hbox{
\begin{tikzpicture}
\node[style=cstate] (z) at
(0,0) {$~~~~Z~~~~~$};
\node[style=cprocess] (uk)
at (0.5,3) {$V_k$};
\draw (uk) to (uk|-dots.north);
\node[style=cprocess] (u1)
at (0.5,1) {$V_1$};
\node (dots) at (0.5,2) {$\vdots$};
\draw (dots) to (dots|-u1.north);
\node[style=cprocess] (w) at (-0.3,2) {$W$};
\draw (w) to (w|-u1.north);
\node (r) at (-0.3,5) {$R$};
\node (s) at (0.5,5) {$S$};
\draw (s) to (s|-uk.north);
\draw (r) to (r|-w.north);
\draw (w) to (w|-z.north);
\draw (u1) to (u1|-z.north);
\end{tikzpicture}}}
& \vcenter{\hbox{{\LARGE $\underset{k\epsilon}{=\joinrel=}$}}} & 
\vcenter{\hbox{
\begin{tikzpicture}
\node[style=cstate] (z) at
(0,0) {$~~~~Z~~~~~$};
\node[style=cprocess] (uk)
at (-0.3,1) {$U_k$};
\draw (uk) to (uk|-z.north);
\node[style=cprocess] (u1)
at (-0.3,3) {$U_1$};
\draw (uk) to (uk|-z.north);
\node (dots) at (-0.3,2) {$\vdots$};
\draw (u1) to (u1|-dots.north);
\draw (dots) to (dots|-uk.north);
\node[style=cprocess] (w) at (-0.3,4) {$W$};
\draw (w) to (w|-u1.north);
\node (r) at (-0.3,5) {$R$};
\node (s) at (0.5,5) {$S$};
\draw (s) to (s|-z.north);
\draw (r) to (r|-w.north);
\end{tikzpicture}}}

\end{array}
\end{equation}

\subsection{Proof of Proposition~\ref{prop:correct pauli}}

\label{app:correct pauli}

We begin with two observations.  First, the 
approximate anti-commutativity in Proposition~\ref{prop:antimulti} implies the following approximate anti-commutativity property for the controlled-$X_k'$ gate (by superposition):

\begin{equation}
\begin{array}{ccc}
\vcenter{\hbox{
\begin{tikzpicture}
\node[style=cprocess] (l) at (1,-3.5) {$~~~~~~~L~~~~~~~$};

\node[style=bell] (bell) at (-1.2,-3.5) {};

\node[style=cprocess,minimum width = 1.5cm] (CX) at (0,-1.5) {$\mathbf{C}(X_k')$};

\draw (l.150) to[out=90,in=-90] node[left] {$A$} (CX);

\draw (bell) to[in=-90, out=45] (CX.230);
\draw (bell) to node[left] {$\overline{Q}_k$} (-1.2, 1);

\node[style=cprocess] (z) at (0,0) {$Z'_k$};

\draw (z) to (z|-CX.north);

\draw (z) to (0,1);

\draw (l.90) to[out=90, in=-90] (2,1);
\draw (l.50) to[out=90, in=-90] (3,1);

\draw (CX.130) to[out=90,in=-90] (-0.6,1);

\end{tikzpicture}
}} 
&
\mbox{\LARGE $\underset{N\sqrt{\epsilon}}{=\joinrel=}$}
&
\vcenter{\hbox{
\begin{tikzpicture}
\node[style=cprocess] (l) at (1,-3.5) {$~~~~~~~L~~~~~~~$};

\node[style=bell] (bell) at (-1.2,-3.5) {};

\node[style=cprocess] (z) at (0,-1.5) {$Z'_k$};

\node[style=cprocess,minimum width = 1.5cm] (CX) at (0,0) {$\mathbf{C}(-X_k')$};
\draw (z) to (z|-CX.south);

\draw (l.150) to[out=90,in=-90] node[left] {$A$} (z);

\draw (bell) to[in=-90, out=75] (CX.230);
\draw (bell) to node[left] {$\overline{Q}_k$} (-1.2, 1);

\draw (CX) to (0,1);
\draw (CX.130) to[out=90,in=-90] (-0.6,1);

\draw (l.90) to[out=90, in=-90] (2,1);
\draw (l.50) to[out=90, in=-90] (3,1);

\end{tikzpicture}
}} 
\end{array}
\label{controlledanti}
\end{equation}

Secondly, the controlled operator 
$\mathbf{C} (X'_{A, k} )$ on $Q_k \otimes A$
can be approximately pushed through the state $\Phi^+ \otimes L$ like so:
\begin{equation}
\begin{array}{ccc}
\vcenter{\hbox{
\begin{tikzpicture}
\node[style=cprocess] (l) at (1,-3.5) {$~~~~~~~L~~~~~~~$};

\node[style=bell] (bell) at (0.8,-1.5) {};

\node[style=cprocess,minimum width = 1.5cm] (CX) at (0,0) {$\mathbf{C}(X_k')$};

\draw (l.150) to[out=90,in=-90]  (CX);

\draw (bell) to[in=-90, out=135] (CX.310);

\node (oq) at (1.2,1.2) {$\overline{Q}_k$};
\draw (bell) to[out=45, in=-90]  (oq);

\node (a) at (-0.5,1.2) {$A$};
\draw (CX) to (a);

\node (q) at (0.3,1.2) {$Q_k$};
\draw (CX.50) to (q);

\node (b) at (2,1.2) {$B$};
\draw (l.90) to[out=90, in=-90] (b);

\node (c) at (3,1.2) {$C$};
\draw (l.50) to[out=90, in=-90] (c);

\end{tikzpicture}
}} 
&
\mbox{\LARGE $\underset{N\sqrt{\epsilon}}{=\joinrel=}$}
&
\vcenter{\hbox{
\begin{tikzpicture}
\node[style=cprocess] (l) at (1,-3.5) {$~~~~~~~L~~~~~~~$};

\node[style=bell] (bell) at (0.8,-1.5) {};

\node[style=cprocess,minimum width = 1.5cm] (CX) at (2,0) {$\mathbf{C}(Z_{B,k}' \otimes Z_{C,k}' )$};


\draw (bell) to[in=-90, out=45] (CX.210);

\node (q) at (0.3,1.2) {$Q_k$};
\node (oq) at (1.2,1.2) {$\overline{Q}_k$};
\draw (bell) to[out=135, in=-90]  (q);

\node (a) at (-0.5,1.2) {$A$};
\draw (l.130) to (a);

\draw (CX.150) to (oq);

\node (b) at (2,1.2) {$B$};
\draw (CX.90) to[out=90, in=-90] (b);
\draw (l.90) to[out=90, in=-90] (CX.270);

\node (c) at (3,1.2) {$C$};
\draw (CX.50) to (c);
\draw (l.50) to[out=90, in=-90] (CX.310);

\end{tikzpicture}
}} 

\end{array}
\end{equation}
And similarly for $Z'_{A,k}$.  The Hadamard
operator $[H \otimes I_A]$ on $Q_k \otimes A$
can be exactly pushed through $\Phi^+ \otimes L$ 
(by merely applying $H$ to $\overline{Q}_k$).  This fact
allows free application of the Push Lemma (Lemma~\ref{lemma:push}).  We have the following,
in which we exploit approximate anti-commutativity,
the Lemma~\ref{lemma:push}, and the rules
for controlled unitaries from Appendix~\ref{appendix cu}.
\begin{equation}
\begin{array}{ccc}
\vcenter{\hbox{
\begin{tikzpicture}

\node[style=cprocess] (l) at (1,-1.5) {$~~~~~~L~~~~~~$};

\node (oq) at (-1,2) {$\overline{Q}_k$};
\node (q) at (0,2) {$Q_k$};
\node (a) at (1,2) {$A$};

\node[style=cprocess] (x) at (0,1) {X};

\node[style=cprocess,minimum width = 1.5cm] (psi) at (0,0) {$\Psi_{k}$};

\draw (l.130) to[out=90, in=-90] (psi.south);

\draw (psi.130) to (oq);
\draw (psi.90) to (x);
\draw (x) to (q);
\draw (psi.50) to (a);

\node (b) at (1.5,0) {$B$};
\node (c) at (2.5,0) {$C$};
\draw (l.90) to[in=-90,out=90]  (b);
\draw (l.50) to[in=-90,out=90]  (c);
\end{tikzpicture}
}}
& \mbox{\LARGE $=\joinrel=$}
&
\vcenter{\hbox{
\begin{tikzpicture}
\node[style=cstate] (l) at (1,-3) {$~~~~~L~~~~~$};

\node[style=bell] (bell) at (-1.2,-2.5) {};

\node (q) at (-.5,4.5) {$Q_k$};
\node (a) at (.5,4.5) {$A$};
\node (oq) at (-1.2,4.5) {$\overline{Q}_k$};

\node[style=cprocess] (x) at (-.5,3.5) {$X$};

\node[style=cprocess,minimum width = 1.5cm] (cz) at (0,0.5) {$\mathbf{C}(Z_k')$};
\node[style=cprocess] (h2) at (-.5,1.5) {$H$};
\node[style=cprocess,minimum width = 1.5cm] (cx2) at (0,2.5) {$\mathbf{C}(X_k')$};

\node[style=cprocess] (h1) at (-.5,-0.5) {$H$};

\node[style=cprocess,minimum width = 1.5cm] (cx1) at (0,-1.5) {$\mathbf{C}(X_k')$};

\draw (l) to (a|-cx1.south);

\draw (a|-cx1.north) to (a|-cz.south);

\draw (a|-cz.north) to (a|-cx2.south);

\draw (bell) to[in=-90, out=0] (cx1.220);
\draw (h1) to (h1|-cx1.north);
\draw (h1) to (h1|-cz.south);

\draw (h2) to (h2|-cz.north);
\draw (h2) to (h2|-cx2.south);

\draw (bell) to (oq);

\draw (h1|-cx2.north) to (h1|-x.south);
\draw (h1|-x.north) to (h1|-q.south);
\draw (a|-cx2.north) to (a|-a.south);

\node (b) at (1.5,0) {$B$};
\node (c) at (2.5,0) {$C$};
\draw (l.90) to[in=-90,out=90]  (b);
\draw (l.50) to[in=-90,out=90]  (c);

\end{tikzpicture}
}} 
\end{array}
\end{equation}

\begin{equation}
\begin{array}{cccc}
\mbox{\LARGE $=\joinrel=$}
&
\vcenter{\hbox{
\begin{tikzpicture}
\node[style=cstate] (l) at (1,-3) {$~~~~~L~~~~~$};

\node[style=bell] (bell) at (-1.2,-2.5) {};

\node (q) at (-.5,4.5) {$Q_k$};
\node (a) at (.5,4.5) {$A$};
\node (oq) at (-1.2,4.5) {$\overline{Q}_k$};

\node[style=cprocess] (x) at (-.5,2.5) {$X$};

\node[style=cprocess,minimum width = 1.5cm] (cz) at (0,0.5) {$\mathbf{C}(Z_k')$};
\node[style=cprocess] (h2) at (-.5,1.5) {$H$};
\node[style=cprocess,minimum width = 1.5cm] (cx2) at (0,3.5) {$\mathbf{C}(X_k')$};

\node[style=cprocess] (h1) at (-.5,-0.5) {$H$};

\node[style=cprocess,minimum width = 1.5cm] (cx1) at (0,-1.5) {$\mathbf{C}(X_k')$};

\node[style=cprocess] (xp) at (0.5,2.5) {$X'_k$};

\draw (l) to (a|-cx1.south);

\draw (a|-cx1.north) to (a|-cz.south);

\draw (a|-cz.north) to (a|-xp.south);
\draw (a|-xp.north) to (a|-cx2.south);

\draw (bell) to[in=-90, out=0] (cx1.220);
\draw (h1) to (h1|-cx1.north);
\draw (h1) to (h1|-cz.south);

\draw (h2) to (h2|-cz.north);
\draw (h2) to (h2|-x.south);

\draw (bell) to (oq);

\draw (h1|-cx2.south) to (h1|-x.north);
\draw (h1|-cx2.north) to (h1|-q.south);
\draw (a|-cx2.north) to (a|-a.south);

\node (b) at (1.5,0) {$B$};
\node (c) at (2.5,0) {$C$};
\draw (l.90) to[in=-90,out=90]  (b);
\draw (l.50) to[in=-90,out=90]  (c);

\end{tikzpicture}
}} 
& \mbox{\LARGE $\underset{N\sqrt{\epsilon}}{=\joinrel=}$}
&
\vcenter{\hbox{
\begin{tikzpicture}
\node[style=cstate] (l) at (1,-3) {$~~~~~L~~~~~$};

\node[style=bell] (bell) at (-1.2,-2.5) {};

\node (q) at (-.5,4.5) {$Q_k$};
\node (a) at (.5,4.5) {$A$};
\node (oq) at (-1.2,4.5) {$\overline{Q}_k$};

\node[style=cprocess] (z) at (-.5,1.5) {$Z$};

\node[style=cprocess,minimum width = 1.5cm] (cz) at (0,0.5) {$\mathbf{C}(-Z_k')$};
\node[style=cprocess] (h2) at (-.5,2.5) {$H$};
\node[style=cprocess,minimum width = 1.5cm] (cx2) at (0,3.5) {$\mathbf{C}(X_k')$};

\node[style=cprocess] (h1) at (-.5,-0.5) {$H$};

\node[style=cprocess,minimum width = 1.5cm] (cx1) at (0,-1.5) {$\mathbf{C}(X_k')$};

\node[style=cprocess] (xp) at (0.5,-0.5) {$X'_k$};

\draw (l) to (a|-cx1.south);

\draw (xp|-xp.north) to (xp|-cz.south);
\draw (xp|-xp.south) to (xp|-cx1.north);

\draw (a|-cz.north) to (a|-cx2.south);

\draw (bell) to[in=-90, out=0] (cx1.220);
\draw (h1) to (h1|-cx1.north);
\draw (h1) to (h1|-cz.south);

\draw (h2) to (h2|-cx2.south);
\draw (h2) to (h2|-z.north);

\draw (z|-cz.north) to (z|-z.south);

\draw (bell) to (oq);

\draw (h1|-cx2.north) to (h1|-q.south);
\draw (a|-cx2.north) to (a|-a.south);

\node (b) at (1.5,0) {$B$};
\node (c) at (2.5,0) {$C$};
\draw (l.90) to[in=-90,out=90]  (b);
\draw (l.50) to[in=-90,out=90]  (c);

\end{tikzpicture}
}} 

\end{array}
\end{equation}

\begin{equation}
\begin{array}{cccc}
\mbox{\LARGE $=\joinrel=$}
&
\vcenter{\hbox{
\begin{tikzpicture}
\node[style=cstate] (l) at (1,-3) {$~~~~~L~~~~~$};

\node[style=bell] (bell) at (-1.2,-2.5) {};

\node (q) at (-.5,4.5) {$Q_k$};
\node (a) at (.5,4.5) {$A$};
\node (oq) at (-1.2,4.5) {$\overline{Q}_k$};

\node[style=cprocess,minimum width = 1.5cm] (cz) at (0,1.5) {$\mathbf{C}(Z_k')$};
\node[style=cprocess] (h2) at (-.5,2.5) {$H$};
\node[style=cprocess,minimum width = 1.5cm] (cx2) at (0,3.5) {$\mathbf{C}(X_k')$};

\node[style=cprocess] (h1) at (-.5,0.5) {$H$};

\node[style=cprocess,minimum width = 1.5cm] (cx1) at (0,-0.5) {$\mathbf{C}(X_k')$};

\node[style=cprocess] (xp) at (0.5,-1.5) {$X'_k$};

\draw (l) to (a|-xp.south);

\draw (xp|-xp.north) to (xp|-cx1.south);

\draw (a|-cz.north) to (a|-cx2.south);

\draw (bell) to[in=-90, out=0] (cx1.220);
\draw (h1) to (h1|-cx1.north);
\draw (h1) to (h1|-cz.south);

\draw (h2) to (h2|-cx2.south);
\draw (h2) to (h2|-cz.north);

\draw (bell) to (oq);

\draw (h1|-cx2.north) to (h1|-q.south);
\draw (a|-cx2.north) to (a|-a.south);

\draw (xp|-cx1.north) to (xp|-cz.south);

\node (b) at (1.5,0) {$B$};
\node (c) at (2.5,0) {$C$};
\draw (l.90) to[in=-90,out=90]  (b);
\draw (l.50) to[in=-90,out=90]  (c);

\end{tikzpicture}
}} 

& \mbox{\LARGE $=\joinrel=$}
&

\vcenter{\hbox{
\begin{tikzpicture}

\node[style=cprocess] (l) at (1,-1.5) {$~~~~~~L~~~~~~$};

\node (oq) at (-1,2) {$\overline{Q}_k$};
\node (q) at (0,2) {$Q_k$};
\node (a) at (1,2) {$A$};
\node (b) at (1.5,0) {$B$};
\node (c) at (2.5,0) {$C$};

\node[style=cprocess] (xp) at (0.3,-0.2) {$X'_k$};

\node[style=cprocess,minimum width = 1.5cm] (psi) at (0,0.8) {$\Psi_{k}$};

\draw (l.130) to (xp);
\draw (xp) to (psi.270);

\draw (psi.130) to (oq);
\draw (psi.50) to (a);
\draw (psi.90) to (q);

\draw (l.90) to[in=-90,out=90]  (b);
\draw (l.50) to[in=-90,out=90]  (c);
\end{tikzpicture}
}}
\end{array}
\end{equation}

Similarly,

\begin{equation}
\begin{array}{ccc}
\vcenter{\hbox{
\begin{tikzpicture}

\node[style=cprocess] (l) at (1,-1.5) {$~~~~~~L~~~~~~$};

\node (oq) at (-1,2) {$\overline{Q}_k$};
\node (q) at (0,2) {$Q_k$};
\node (a) at (1,2) {$A$};

\node[style=cprocess] (z) at (0,1) {Z};

\node[style=cprocess,minimum width = 1.5cm] (psi) at (0,0) {$\Psi_{k}$};

\draw (l.130) to[out=90, in=-90] (psi.south);

\draw (psi.130) to (oq);
\draw (psi.90) to (z);
\draw (z) to (q);
\draw (psi.50) to (a);

\node (b) at (1.5,0) {$B$};
\node (c) at (2.5,0) {$C$};
\draw (l.90) to[in=-90,out=90]  (b);
\draw (l.50) to[in=-90,out=90]  (c);
\end{tikzpicture}
}}
& \mbox{\LARGE $=\joinrel=$}
&
\vcenter{\hbox{
\begin{tikzpicture}
\node[style=cstate] (l) at (1,-3) {$~~~~~L~~~~~$};

\node[style=bell] (bell) at (-1.2,-2.5) {};

\node (q) at (-.5,4.5) {$Q_k$};
\node (a) at (.5,4.5) {$A$};
\node (oq) at (-1.2,4.5) {$\overline{Q}_k$};

\node[style=cprocess] (z) at (-.5,3.5) {$Z$};

\node[style=cprocess,minimum width = 1.5cm] (cz) at (0,0.5) {$\mathbf{C}(Z_k')$};
\node[style=cprocess] (h2) at (-.5,1.5) {$H$};
\node[style=cprocess,minimum width = 1.5cm] (cx2) at (0,2.5) {$\mathbf{C}(X_k')$};

\node[style=cprocess] (h1) at (-.5,-0.5) {$H$};

\node[style=cprocess,minimum width = 1.5cm] (cx1) at (0,-1.5) {$\mathbf{C}(X_k')$};

\draw (l) to (a|-cx1.south);

\draw (a|-cx1.north) to (a|-cz.south);

\draw (a|-cz.north) to (a|-cx2.south);

\draw (bell) to[in=-90, out=0] (cx1.220);
\draw (h1) to (h1|-cx1.north);
\draw (h1) to (h1|-cz.south);

\draw (h2) to (h2|-cz.north);
\draw (h2) to (h2|-cx2.south);

\draw (bell) to (oq);

\draw (h1|-cx2.north) to (h1|-z.south);
\draw (h1|-z.north) to (h1|-q.south);
\draw (a|-cx2.north) to (a|-a.south);

\node (b) at (1.5,0) {$B$};
\node (c) at (2.5,0) {$C$};
\draw (l.90) to[in=-90,out=90]  (b);
\draw (l.50) to[in=-90,out=90]  (c);

\end{tikzpicture}
}} 
\end{array}
\end{equation}

\begin{equation}
\begin{array}{cccc}
\mbox{\LARGE $=\joinrel=$}
&
\vcenter{\hbox{
\begin{tikzpicture}
\node[style=cstate] (l) at (1,-3) {$~~~~~L~~~~~$};

\node[style=bell] (bell) at (-1.2,-2.5) {};

\node (q) at (-.5,4.5) {$Q_k$};
\node (a) at (.5,4.5) {$A$};
\node (oq) at (-1.2,4.5) {$\overline{Q}_k$};

\node[style=cprocess] (z) at (-.5,2.5) {$Z$};

\node[style=cprocess,minimum width = 1.5cm] (cz) at (0,0.5) {$\mathbf{C}(Z_k')$};
\node[style=cprocess] (h2) at (-.5,1.5) {$H$};
\node[style=cprocess,minimum width = 1.5cm] (cx2) at (0,3.5) {$\mathbf{C}(X_k')$};

\node[style=cprocess] (h1) at (-.5,-0.5) {$H$};

\node[style=cprocess,minimum width = 1.5cm] (cx1) at (0,-1.5) {$\mathbf{C}(X_k')$};

\draw (l) to (a|-cx1.south);

\draw (a|-cx1.north) to (a|-cz.south);

\draw (a|-cz.north) to (a|-cx2.south);

\draw (bell) to[in=-90, out=0] (cx1.220);
\draw (h1) to (h1|-cx1.north);
\draw (h1) to (h1|-cz.south);

\draw (h2) to (h2|-cz.north);
\draw (h2) to (h2|-z.south);

\draw (bell) to (oq);

\draw (h1|-cx2.south) to (h1|-z.north);
\draw (h1|-cx2.north) to (h1|-q.south);
\draw (a|-cx2.north) to (a|-a.south);

\node (b) at (1.5,0) {$B$};
\node (c) at (2.5,0) {$C$};
\draw (l.90) to[in=-90,out=90]  (b);
\draw (l.50) to[in=-90,out=90]  (c);

\end{tikzpicture}
}} 

& 

\mbox{\LARGE $=\joinrel=$}
&
\vcenter{\hbox{
\begin{tikzpicture}
\node[style=cstate] (l) at (1,-3) {$~~~~~L~~~~~$};

\node[style=bell] (bell) at (-1.2,-2.5) {};

\node (q) at (-.5,4.5) {$Q_k$};
\node (a) at (.5,4.5) {$A$};
\node (oq) at (-1.2,4.5) {$\overline{Q}_k$};

\node[style=cprocess] (x) at (-.5,1.5) {$X$};

\node[style=cprocess,minimum width = 1.5cm] (cz) at (0,0.5) {$\mathbf{C}(Z_k')$};
\node[style=cprocess] (h2) at (-.5,2.5) {$H$};
\node[style=cprocess,minimum width = 1.5cm] (cx2) at (0,3.5) {$\mathbf{C}(X_k')$};

\node[style=cprocess] (h1) at (-.5,-0.5) {$H$};

\node[style=cprocess,minimum width = 1.5cm] (cx1) at (0,-1.5) {$\mathbf{C}(X_k')$};

\draw (l) to (a|-cx1.south);

\draw (a|-cx1.north) to (a|-cz.south);

\draw (a|-cz.north) to (a|-cx2.south);

\draw (bell) to[in=-90, out=0] (cx1.220);
\draw (h1) to (h1|-cx1.north);
\draw (h1) to (h1|-cz.south);

\draw (x) to (h2|-cz.north);
\draw (h2) to (h2|-x.north);

\draw (bell) to (oq);

\draw (h1|-cx2.south) to (h1|-z.north);
\draw (h1|-cx2.north) to (h1|-q.south);
\draw (a|-cx2.north) to (a|-a.south);

\node (b) at (1.5,0) {$B$};
\node (c) at (2.5,0) {$C$};
\draw (l.90) to[in=-90,out=90]  (b);
\draw (l.50) to[in=-90,out=90]  (c);

\end{tikzpicture}
}} 

\end{array}
\end{equation}

\begin{equation}
\begin{array}{cccc}
\mbox{\LARGE $=\joinrel=$}
&
\vcenter{\hbox{
\begin{tikzpicture}
\node[style=cstate] (l) at (1,-3) {$~~~~~L~~~~~$};

\node[style=bell] (bell) at (-1.2,-2.5) {};

\node (q) at (-.5,4.5) {$Q_k$};
\node (a) at (.5,4.5) {$A$};
\node (oq) at (-1.2,4.5) {$\overline{Q}_k$};

\node[style=cprocess] (x) at (-.5,0.5) {$X$};

\node[style=cprocess,minimum width = 1.5cm] (cz) at (0,1.5) {$\mathbf{C}(Z_k')$};
\node[style=cprocess] (h2) at (-.5,2.5) {$H$};
\node[style=cprocess,minimum width = 1.5cm] (cx2) at (0,3.5) {$\mathbf{C}(X_k')$};

\node[style=cprocess] (h1) at (-.5,-0.5) {$H$};

\node[style=cprocess,minimum width = 1.5cm] (cx1) at (0,-1.5) {$\mathbf{C}(X_k')$};

\node[style=cprocess] (zp) at (.5,0.5) {$Z_k'$};

\draw (l) to (a|-cx1.south);

\draw (a|-cx1.north) to (a|-zp.south);
\draw (a|-cz.south) to (a|-zp.north);

\draw (a|-cz.north) to (a|-cx2.south);

\draw (bell) to[in=-90, out=0] (cx1.220);
\draw (h1) to (h1|-cx1.north);
\draw (h1) to (h1|-x.south);

\draw (x) to (h2|-cz.south);
\draw (h2) to (h2|-cz.north);

\draw (bell) to (oq);

\draw (h1|-cx2.south) to (h1|-z.north);
\draw (h1|-cx2.north) to (h1|-q.south);
\draw (a|-cx2.north) to (a|-a.south);

\node (b) at (1.5,0) {$B$};
\node (c) at (2.5,0) {$C$};
\draw (l.90) to[in=-90,out=90]  (b);
\draw (l.50) to[in=-90,out=90]  (c);

\end{tikzpicture}
}} 
&

\mbox{\LARGE $=\joinrel=$}
&
\vcenter{\hbox{
\begin{tikzpicture}
\node[style=cstate] (l) at (1,-3) {$~~~~~L~~~~~$};

\node[style=bell] (bell) at (-1.2,-2.5) {};

\node (q) at (-.5,4.5) {$Q_k$};
\node (a) at (.5,4.5) {$A$};
\node (oq) at (-1.2,4.5) {$\overline{Q}_k$};

\node[style=cprocess] (z) at (-.5,-0.5) {$Z$};

\node[style=cprocess,minimum width = 1.5cm] (cz) at (0,1.5) {$\mathbf{C}(Z_k')$};
\node[style=cprocess] (h2) at (-.5,2.5) {$H$};
\node[style=cprocess,minimum width = 1.5cm] (cx2) at (0,3.5) {$\mathbf{C}(X_k')$};

\node[style=cprocess] (h1) at (-.5,0.5) {$H$};

\node[style=cprocess,minimum width = 1.5cm] (cx1) at (0,-1.5) {$\mathbf{C}(X_k')$};

\node[style=cprocess] (zp) at (.5,0.5) {$Z_k'$};

\draw (l) to (a|-cx1.south);

\draw (a|-cx1.north) to (a|-zp.south);
\draw (a|-cz.south) to (a|-zp.north);

\draw (a|-cz.north) to (a|-cx2.south);

\draw (bell) to[in=-90, out=0] (cx1.220);
\draw (h1) to (h1|-cz.south);
\draw (h1) to (h1|-z.north);

\draw (h2) to (h2|-cx2.south);
\draw (h2) to (h2|-cz.north);

\draw (bell) to (oq);

\draw (h1|-cx1.north) to (h1|-z.south);
\draw (h1|-cx2.north) to (h1|-q.south);
\draw (a|-cx2.north) to (a|-a.south);

\node (b) at (1.5,0) {$B$};
\node (c) at (2.5,0) {$C$};
\draw (l.90) to[in=-90,out=90]  (b);
\draw (l.50) to[in=-90,out=90]  (c);

\end{tikzpicture}
}} 

\end{array}
\end{equation}

\begin{equation}
\begin{array}{cccc}
\mbox{\LARGE $=\joinrel=$}
&
\vcenter{\hbox{
\begin{tikzpicture}
\node[style=cstate] (l) at (1,-3) {$~~~~~L~~~~~$};

\node[style=bell] (bell) at (-1.2,-2.5) {};

\node (q) at (-.5,4.5) {$Q_k$};
\node (a) at (.5,4.5) {$A$};
\node (oq) at (-1.2,4.5) {$\overline{Q}_k$};

\node[style=cprocess,minimum width = 1.5cm] (cz) at (0,1.5) {$\mathbf{C}(Z_k')$};
\node[style=cprocess] (h2) at (-.5,2.5) {$H$};
\node[style=cprocess,minimum width = 1.5cm] (cx2) at (0,3.5) {$\mathbf{C}(X_k')$};

\node[style=cprocess] (h1) at (-.5,0.5) {$H$};

\node[style=cprocess,minimum width = 1.5cm] (cx1) at (0,-0.5) {$\mathbf{C}(-X_k')$};

\node[style=cprocess] (zp) at (.5,0.5) {$Z_k'$};

\draw (l) to (a|-cx1.south);

\draw (a|-cx1.north) to (a|-zp.south);
\draw (a|-cz.south) to (a|-zp.north);

\draw (a|-cz.north) to (a|-cx2.south);

\draw (bell) to[in=-90, out=0] (cx1.220);
\draw (h1) to (h1|-cz.south);
\draw (h1) to (h1|-cx1.north);

\draw (h2) to (h2|-cx2.south);
\draw (h2) to (h2|-cz.north);

\draw (bell) to (oq);

\draw (h1|-cx2.north) to (h1|-q.south);
\draw (a|-cx2.north) to (a|-a.south);

\node (b) at (1.5,0) {$B$};
\node (c) at (2.5,0) {$C$};
\draw (l.90) to[in=-90,out=90]  (b);
\draw (l.50) to[in=-90,out=90]  (c);

\end{tikzpicture}
}} 

&

\mbox{\LARGE $\underset{N \sqrt{\epsilon}}{=\joinrel=}$}
&
\vcenter{\hbox{
\begin{tikzpicture}
\node[style=cstate] (l) at (1,-3) {$~~~~~L~~~~~$};

\node[style=bell] (bell) at (-1.2,-2.5) {};

\node (q) at (-.5,4.5) {$Q_k$};
\node (a) at (.5,4.5) {$A$};
\node (oq) at (-1.2,4.5) {$\overline{Q}_k$};

\node[style=cprocess,minimum width = 1.5cm] (cz) at (0,1.5) {$\mathbf{C}(Z_k')$};
\node[style=cprocess] (h2) at (-.5,2.5) {$H$};
\node[style=cprocess,minimum width = 1.5cm] (cx2) at (0,3.5) {$\mathbf{C}(X_k')$};

\node[style=cprocess] (h1) at (-.5,0.5) {$H$};

\node[style=cprocess,minimum width = 1.5cm] (cx1) at (0,-0.5) {$\mathbf{C}(X_k')$};

\node[style=cprocess] (zp) at (.5,-1.5) {$Z_k'$};

\draw (l) to (a|-zp.south);

\draw (a|-cx1.south) to (a|-zp.north);
\draw (a|-cz.south) to (a|-cx1.north);

\draw (a|-cz.north) to (a|-cx2.south);

\draw (bell) to[in=-90, out=0] (cx1.220);
\draw (h1) to (h1|-cz.south);
\draw (h1) to (h1|-cx1.north);

\draw (h2) to (h2|-cx2.south);
\draw (h2) to (h2|-cz.north);

\draw (bell) to (oq);

\draw (h1|-cx2.north) to (h1|-q.south);
\draw (a|-cx2.north) to (a|-a.south);

\node (b) at (1.5,0) {$B$};
\node (c) at (2.5,0) {$C$};
\draw (l.90) to[in=-90,out=90]  (b);
\draw (l.50) to[in=-90,out=90]  (c);

\end{tikzpicture}
}}

\end{array}
\end{equation}

\begin{equation}
\begin{array}{cc}
\mbox{\LARGE $=\joinrel=$}
&
\vcenter{\hbox{
\begin{tikzpicture}

\node[style=cprocess] (l) at (1,-1.5) {$~~~~~~L~~~~~~$};

\node (oq) at (-1,2) {$\overline{Q}_k$};
\node (q) at (0,2) {$Q_k$};
\node (a) at (1,2) {$A$};
\node (b) at (1.5,0) {$B$};
\node (c) at (2.5,0) {$C$};

\node[style=cprocess] (zp) at (0.3,-0.2) {$Z'_k$};

\node[style=cprocess,minimum width = 1.5cm] (psi) at (0,0.8) {$\Psi_{k}$};

\draw (l.130) to (zp);
\draw (zp) to (psi.270);

\draw (psi.130) to (oq);
\draw (psi.50) to (a);
\draw (psi.90) to (q);

\draw (l.90) to[in=-90,out=90]  (b);
\draw (l.50) to[in=-90,out=90]  (c);
\end{tikzpicture}
}}
\end{array}
\end{equation}
as desired.  This completes the proof.

\subsection{Proof of Proposition~\ref{prop:multi pauli}}

We begin with the following lemma.  It is similar to
the Push Lemma (Lemma~\ref{lemma:push}) but it specifically
addresses commutativity.

\begin{lemma}
\label{lemma:commute}
Suppose that $R, S$ are registers,
$Z \in R \otimes S$ is a unit vector, and $V, U_1, U_2, \ldots, U_k$ are unitary operators
on $R$ such that
\begin{enumerate}
\item Each map $U_i$ can be pushed
through $Z$ with error term $\epsilon$, and
\item The approximate
equality
$(VU_i \otimes I_S)L \underset{\epsilon}{=} (U_iV \otimes I_S)L$ holds for all $i$.
\end{enumerate}
Then,
\begin{equation}
\begin{array}{ccc}
\vcenter{\hbox{
\begin{tikzpicture}
\node[style=cstate] (z) at
(0,0) {$~~~~Z~~~~~$};
\node[style=cprocess] (uk)
at (-0.3,1) {$U_k$};
\draw (uk) to (uk|-z.north);
\node[style=cprocess] (u1)
at (-0.3,3) {$U_1$};
\draw (uk) to (uk|-z.north);
\node (dots) at (-0.3,2) {$\vdots$};
\draw (u1) to (u1|-dots.north);
\draw (dots) to (dots|-uk.north);
\node[style=cprocess] (v) at (-0.3,4) {$V$};
\draw (v) to (v|-u1.north);
\node (r) at (-0.3,5) {$R$};
\node (s) at (0.5,5) {$S$};
\draw (s) to (s|-z.north);
\draw (r) to (r|-v.north);
\end{tikzpicture}}}
& \vcenter{\hbox{{\LARGE $\underset{k^2 \epsilon}{=\joinrel=}$}}} & 
\vcenter{\hbox{
\begin{tikzpicture}
\node[style=cstate] (z) at
(0,0) {$~~~~Z~~~~~$};
\node[style=cprocess] (uk)
at (-0.3,2) {$U_k$};
\node[style=cprocess] (u1)
at (-0.3,4) {$U_1$};
\node (dots) at (-0.3,3) {$\vdots$};
\draw (u1) to (u1|-dots.north);
\draw (dots) to (dots|-uk.north);
\node[style=cprocess] (v) at (-0.3,1) {$V$};
\draw (v) to (v|-uk.south);
\node (r) at (-0.3,5) {$R$};
\node (s) at (0.5,5) {$S$};
\draw (s) to (s|-z.north);
\draw (u1) to (u1|-r.south);
\draw (v) to (v|-z.north);
\end{tikzpicture}}}
\end{array}
\end{equation}
\end{lemma}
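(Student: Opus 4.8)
The plan is to commute the single unitary $V$ rightward past the entire stack $U_1 U_2 \cdots U_k$ one factor at a time, so that the conclusion $(V U_1 \cdots U_k \otimes I_S) Z \underset{k^2\epsilon}{=} (U_1 \cdots U_k V \otimes I_S) Z$ is obtained by chaining $k$ elementary commutations and summing their error terms. (Throughout, $Z$ plays the role of the state $L$, and every operator written to the left of $Z$ acts on the $R$-register.) The immediate difficulty is that hypothesis (2) only asserts that $V$ commutes with each $U_i$ after application to $Z$ \emph{itself}; it says nothing about commutation on an intermediate state such as $U_{j+1}\cdots U_k Z$, which is exactly the state that $V$ meets when it reaches $U_j$. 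A naive global strategy --- push all the $U_i$ onto the $S$-register as in the Push Lemma, commute, and push back --- fails precisely here: once $V$ is brought adjacent to $Z$, the relevant state becomes $VZ$, and we have no push relations for $U_i$ through $VZ$.

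To overcome this I would first isolate an elementary commutation step as a sub-claim: for each $j$,
\[
(V U_j U_{j+1}\cdots U_k \otimes I_S)Z \underset{(2(k-j)+1)\epsilon}{=\joinrel=} (U_j V U_{j+1}\cdots U_k \otimes I_S) Z.
\]
This is proved by a push-and-return trick applied only to the operators strictly between $V$ and $Z$. Using hypothesis (1) and the inductive push argument behind Lemma~\ref{lemma:push}, I push $U_{j+1}, \ldots, U_k$ onto the $S$-register, incurring error $(k-j)\epsilon$ and turning $U_{j+1}\cdots U_k Z$ into $V_k\cdots V_{j+1} Z$ for some unitaries $V_i$ on $S$. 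Since these $S$-operators commute \emph{exactly} with everything on $R$, I may slide them out of the way so that $V U_j$ acts directly on $Z$, and then invoke hypothesis (2) on the pure state $Z$ to get $V U_j Z \underset{\epsilon}{=} U_j V Z$; crucially, the single commutation is performed on $Z$ and not on a contaminated state. Finally I slide the $S$-operators back against $Z$ (they now sit to the right of $V$, adjacent to $Z$) and return them to the $R$-register using the relation through $Z$, incurring a further $(k-j)\epsilon$. The three contributions sum to $(2(k-j)+1)\epsilon$. Each intermediate manipulation leaves only a unitary of operator norm one acting on the outside, so by the norm-scaling remark in Subsection~\ref{sec approx} the error terms propagate unchanged.

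With the sub-claim in hand, the main statement follows by transitivity of $\underset{\delta}{=}$: I move $V$ across $U_1$, then across $U_2$, and so on, the $j$-th step being an instance of the sub-claim pre-composed with the fixed unitary $U_1\cdots U_{j-1}$ (norm one). The total error is
\[
\sum_{j=1}^{k}\bigl(2(k-j)+1\bigr)\epsilon = \Bigl(2\cdot\tfrac{k(k-1)}{2}+k\Bigr)\epsilon = k^2\epsilon,
\]
as claimed.

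I expect the genuine obstacle to be conceptual rather than computational: recognizing that the commutation hypothesis must be invoked on $Z$ alone, and therefore that the operators lying between $V$ and $Z$ have to be detoured through the $S$-register and returned at \emph{every} step. This repeated detour is exactly what turns the linear error $k\epsilon$ of the Push Lemma into the quadratic $k^2\epsilon$ here, so the quadratic blow-up is essential and not an artifact of loose bookkeeping.
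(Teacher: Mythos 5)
Your proposal is correct and takes essentially the same route as the paper, which disposes of this lemma in one line as ``$k$ applications of Lemma~\ref{lemma:push}'': each of your per-step arguments (push $U_{j+1},\dots,U_k$ out to the $S$-register, invoke the commutation hypothesis on $Z$ itself, push back) is precisely one application of the Push Lemma with its $V$, $W$, $\delta$ instantiated as $VU_j$, $U_jV$, $\epsilon$, so you have merely inlined that lemma's proof rather than citing it. Your bookkeeping $\sum_{j=1}^{k}\bigl(2(k-j)+1\bigr)\epsilon = k^2\epsilon$, and the observation that the commutation hypothesis may only be used on $Z$ and not on intermediate states, match the intended argument exactly.
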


\begin{proof}
This follows easily by $k$ applications of
Lemma~\ref{lemma:push}.
\end{proof}

By Proposition~\ref{prop:app commutativity},
for any $j \neq k$, we have
\begin{equation}
\begin{array}{ccc}
\vcenter{\hbox{
\begin{tikzpicture}
\node[style=cprocess] (l) at (1,-3.5) {$~~~~~~~L~~~~~~~$};

\node[style=bell] (bell) at (-1.2,-3.5) {};

\node[style=cprocess,minimum width = 1.5cm] (CX) at (0,-1.5) {$\mathbf{C}(X_k')$};

\draw (l.150) to[out=90,in=-90] node[left] {$A$} (CX);

\draw (bell) to[in=-90, out=45] (CX.230);
\draw (bell) to node[left] {$\overline{Q}_k$} (-1.2, 1);

\node[style=cprocess] (z) at (0,0) {$X'_\ell$};

\draw (z) to (z|-CX.north);

\draw (z) to (0,1);

\draw (l.90) to[out=90, in=-90] (2,1);
\draw (l.50) to[out=90, in=-90] (3,1);

\draw (CX.130) to[out=90,in=-90] (-0.6,1);

\end{tikzpicture}
}} 
&
\mbox{\LARGE $\underset{N\sqrt{\epsilon}}{=\joinrel=}$}
&
\vcenter{\hbox{
\begin{tikzpicture}
\node[style=cprocess] (l) at (1,-3.5) {$~~~~~~~L~~~~~~~$};

\node[style=bell] (bell) at (-1.2,-3.5) {};

\node[style=cprocess] (z) at (0,-1.5) {$X'_\ell$};

\node[style=cprocess,minimum width = 1.5cm] (CX) at (0,0) {$\mathbf{C}(X_k')$};
\draw (z) to (z|-CX.south);

\draw (l.150) to[out=90,in=-90] node[left] {$A$} (z);

\draw (bell) to[in=-90, out=75] (CX.230);
\draw (bell) to node[left] {$\overline{Q}_k$} (-1.2, 1);

\draw (CX) to (0,1);
\draw (CX.130) to[out=90,in=-90] (-0.6,1);

\draw (l.90) to[out=90, in=-90] (2,1);
\draw (l.50) to[out=90, in=-90] (3,1);

\end{tikzpicture}
}} 
\end{array}
\end{equation}
and the same holds with $(X'_\ell, X'_k)$
replaced by $(X'_\ell, Z'_k)$,
$(Z'_\ell, X'_k)$, or $(Z'_\ell, Z'_k)$.  
Also, as noted at the beginning of section~\ref{app:correct pauli}, the
gates that define $\Psi_{A,k}$ (see diagram (\ref{swapmap}) can each be pushed
through $L$ with error term $N \sqrt{\epsilon}$.  Therefore by Lemma~\ref{lemma:commute},
\begin{equation}
\begin{array}{ccc}
\vcenter{\hbox{
\begin{tikzpicture}
\node[style=cprocess] (l) at (1,-3.5) {$~~~~~~~L~~~~~~~$};

\node[style=bell] (bell) at (-1.2,-3.5) {};

\node[style=cprocess,minimum width = 1.5cm] (CX) at (0,-1.5) {$\Psi_k$};

\draw (l.150) to[out=90,in=-90] node[left] {$A$} (CX);

\draw (bell) to[in=-90, out=45] (CX.230);
\draw (bell) to node[left] {$\overline{Q}_k$} (-1.2, 1);

\node[style=cprocess] (z) at (0,0) {$X'_\ell$};

\draw (z) to (z|-CX.north);

\draw (z) to (0,1);

\draw (l.90) to[out=90, in=-90] (2,1);
\draw (l.50) to[out=90, in=-90] (3,1);

\draw (CX.130) to[out=90,in=-90] (-0.6,1);

\end{tikzpicture}
}} 
&
\mbox{\LARGE $\underset{N\sqrt{\epsilon}}{=\joinrel=}$}
&
\vcenter{\hbox{
\begin{tikzpicture}
\node[style=cprocess] (l) at (1,-3.5) {$~~~~~~~L~~~~~~~$};

\node[style=bell] (bell) at (-1.2,-3.5) {};

\node[style=cprocess] (z) at (0,-1.5) {$X'_\ell$};

\node[style=cprocess,minimum width = 1.5cm] (CX) at (0,0) {$\Psi_k$};
\draw (z) to (z|-CX.south);

\draw (l.150) to[out=90,in=-90] node[left] {$A$} (z);

\draw (bell) to[in=-90, out=75] (CX.230);
\draw (bell) to node[left] {$\overline{Q}_k$} (-1.2, 1);

\draw (CX) to (0,1);
\draw (CX.130) to[out=90,in=-90] (-0.6,1);

\draw (l.90) to[out=90, in=-90] (2,1);
\draw (l.50) to[out=90, in=-90] (3,1);

\end{tikzpicture}
}} 
\end{array}
\end{equation}
We therefore have the following, in which we first apply Proposition~\ref{prop:correct pauli} with Lemma~\ref{lemma:push}, and then apply Lemma~\ref{lemma:commute}.
\begin{equation}
\begin{array}{ccccc}
\vcenter{\hbox{
\begin{tikzpicture}

\node (out1) at (-1.5,3) {$\mathbf{Q}_{1
\ldots k}$};
\node (out2) at (-.5,3) {$A$};
\node (out3) at (.5,3) {$B$};
\node (out4) at (1.5,3) {$C$};

\node[style=cprocess] (X) at (-1.5,2) {$X_{k}$}; 

\node[style=cprocess, minimum width =1.5cm] (psi) at (-1,1) {$\Theta_k$};

\node[style=cstate,minimum width=2.5cm] (L) at (-.5,0) {$L$};

\draw (out1) to[out=-90,in=90] (out1|-X.north);
\draw (X) to (X|-psi.north);
\draw (out2) to (out2|-psi.north);
\draw (out3) to (out3|-L.north);
\draw (out4) to (out4|-L.north);

\draw (psi) to (psi|-L.north);

\end{tikzpicture}
}}
& \mbox{\LARGE $=\joinrel=$} &

\vcenter{\hbox{
\begin{tikzpicture}

\node (out1) at (-1.5,3) {$\mathbf{Q}_k$};

\node (qleft) at (-3,3) {$\mathbf{Q}_{1 \ldots k-1}$};

\node (out2) at (-.5,3) {$A$};
\node (out3) at (.5,3) {$B$};
\node (out4) at (1.5,3) {$C$};

\node[style=cprocess] (X) at (-1.5,2) {$X_{k}$}; 

\node[style=cprocess, minimum width =1.5cm] (psi) at (-1,1) {$\Psi_k$};

\node[style=cprocess] (psilower) at (-2,0) {$\Theta_{k-1}$};

\node[style=cstate,minimum width=2.5cm] (L) at (-.5,-1) {$L$};

\draw (out3) to (out3|-L.north);
\draw (out4) to (out4|-L.north);

\draw (L) to (psilower|-psilower.south);
\draw (psilower) to (psi);
\draw (psi) to (X);
\draw (psilower) to[in=-90, out=90] (qleft);

\draw (X) to (out1);

\draw (psi) to (out2);

\end{tikzpicture}
}}

\end{array}
\end{equation}

\begin{equation}
\begin{array}{ccccccc}
& \mbox{\LARGE $\underset{N^2 \sqrt{\epsilon}}{ =\joinrel=}$} &
\vcenter{\hbox{
\begin{tikzpicture}

\node (out1) at (-1.5,3) {$\mathbf{Q}_k$};

\node (qleft) at (-3,3) {$\mathbf{Q}_{1 \ldots k-1}$};

\node (out2) at (-.5,3) {$A$};
\node (out3) at (.5,3) {$B$};
\node (out4) at (1.5,3) {$C$};

\node[style=cprocess] (X) at (-1.5,1) {$X'_{k}$}; 

\node[style=cprocess, minimum width =1.5cm] (psi) at (-1,2) {$\Psi_k$};

\node[style=cprocess] (psilower) at (-2,0) {$\Theta_{k-1}$};

\node[style=cstate,minimum width=2.5cm] (L) at (-.5,-1) {$L$};

\draw (out3) to (out3|-L.north);
\draw (out4) to (out4|-L.north);

\draw (L) to (psilower|-psilower.south);
\draw (psilower) to[in=-90, out=90] (qleft);

\draw (X) to (psi);

\draw (psi) to (out2);

\draw (psi) to (out1);

\draw (psilower) to (X);

\end{tikzpicture}
}}

&  

& \mbox{\LARGE $\underset{N^3 \sqrt{\epsilon}}{ =\joinrel=}$} &
\vcenter{\hbox{
\begin{tikzpicture}

\node (out1) at (-1.5,3) {$\mathbf{Q}_k$};

\node (qleft) at (-3,3) {$\mathbf{Q}_{1 \ldots k-1}$};

\node (out2) at (-.5,3) {$A$};
\node (out3) at (.5,3) {$B$};
\node (out4) at (1.5,3) {$C$};

\node[style=cprocess] (X) at (-2,0) {$X'_{k}$}; 

\node[style=cprocess, minimum width =1.5cm] (psi) at (-1,2) {$\Psi_k$};

\node[style=cprocess] (psilower) at (-1.8,1) {$\Theta_{k-1}$};

\node[style=cstate,minimum width=2.5cm] (L) at (-.5,-1) {$L$};

\draw (out3) to (out3|-L.north);
\draw (out4) to (out4|-L.north);

\draw (L) to (X.290);
\draw (psilower.120) to[in=-90, out=90] (qleft);

\draw (X) to (psilower);

\draw (psi) to (out2);

\draw (psi) to (out1);

\draw (psilower) to (psi);

\end{tikzpicture}
}}
\end{array}
\end{equation}
An analogous statement holds with $X$ replaced by $Z$.
Applying the isometries $\Psi_{k+1}, \ldots, \Psi_N$ in order now completes the proof.

\end{appendices}

\end{document}